\newtheorem{theorem}{Theorem}
\newtheorem{lemma}{Lemma}
\theoremstyle{remark}
\newtheorem*{remark}{Remark}
\theoremstyle{definition}
\begin{document}

\title{Avoiding symmetry roadblocks and minimizing the measurement overhead of adaptive variational quantum eigensolvers}

\author{V. O. Shkolnikov}
\affiliation{Department of Physics, Virginia Tech, Blacksburg, VA 24061, USA}
\affiliation{Virginia Tech Center for Quantum Information Science and Engineering, Blacksburg, VA 24061, USA}
\author{Nicholas J. Mayhall}
\affiliation{Department of Chemistry, Virginia Tech, Blacksburg, VA 24061, USA}
\affiliation{Virginia Tech Center for Quantum Information Science and Engineering, Blacksburg, VA 24061, USA}
\author{Sophia E. Economou}
\affiliation{Department of Physics, Virginia Tech, Blacksburg, VA 24061, USA}
\affiliation{Virginia Tech Center for Quantum Information Science and Engineering, Blacksburg, VA 24061, USA}
\author{Edwin Barnes}
\affiliation{Department of Physics, Virginia Tech, Blacksburg, VA 24061, USA}
\affiliation{Virginia Tech Center for Quantum Information Science and Engineering, Blacksburg, VA 24061, USA}

\maketitle

\begin{abstract}
Quantum simulation of strongly correlated systems is potentially the most feasible useful application of near-term quantum computers \cite{Preskill_quantum2018}. Minimizing quantum computational resources is crucial to achieving this goal. A promising class of algorithms for this purpose consists of variational quantum eigensolvers (VQEs) \cite{Peruzzo2014,McClean_njp2016,Romero_qst2018,Lee_jctcomp2019,CerezoNature2021,bharti2021noisy}. Among these, problem-tailored versions such as ADAPT-VQE \cite{Grimsley_naturecomm2019,Tang_PRXQuantum2021} that build variational ans\"atze step by step from a predefined operator pool perform particularly well in terms of circuit depths and variational parameter counts. However, this improved performance comes at the expense of an additional measurement overhead compared to standard VQEs. Here, we show that this overhead can be reduced to an amount that grows only linearly with the number $n$ of qubits, instead of quartically as in the original ADAPT-VQE. We do this by proving that operator pools of size $2n-2$ can represent any state in Hilbert space if chosen appropriately. We prove that this is the minimal size of such "complete" pools, discuss their algebraic properties, and present necessary and sufficient conditions for their completeness that allow us to find such pools efficiently. We further show that, if the simulated problem possesses symmetries, then complete pools can fail to yield convergent results, unless the pool is chosen to obey certain symmetry rules. We demonstrate the performance of such symmetry-adapted complete pools by using them in classical simulations of ADAPT-VQE for several strongly correlated molecules. Our findings are relevant for any VQE that uses an ansatz based on Pauli strings.
\end{abstract}

\section{Introduction}
Achieving an accurate description of strongly correlated systems is one of the most challenging scientific problems of our time \cite{Reiher_pnas2017,bassman_arxiv2021}. It manifests itself when the approximation in which electrons occupy orbitals independently from each other breaks down. In other words, strongly correlated materials cannot adequately be described by a single Slater determinant, no matter how we choose the orbital basis. In these cases, classical methods like density functional theory \cite{Hohenberg_physrev1964,Kohn_physrev1965} or mean-field approaches are not able to capture the behavior of the system. Although classical computational chemistry methods do exist (such as coupled cluster (CC)  methods) \cite{Taube_ijqc2006,Kutzelnigg_tca1991,Bartlett_chemphyslett1989} that go beyond the single Slater determinant, strong correlation quickly degrades their quantitative accuracy. For small enough systems, exact diagonalization (i.e.,  full configuration interaction (FCI)) can be used to obtain the ground and excited state energies to arbitrary precision. However, the dimension of the Hilbert space grows combinatorially, such that even for molecules of modest size, it becomes practically impossible to apply these methods. For example, the electronic wave function of a zero-spin molecule with twenty spatial orbitals (forty spin orbitals) at half-filling is represented by approximately 34 billion coefficients, which already takes about 275 Gb of memory and makes the calculations extremely challenging and time-consuming.

This "curse of dimensionality" can be avoided entirely if another, controllable quantum system is used to simulate the system of interest. For the example above, forty quantum bits would suffice to represent the many-body state we are interested in. If additionally the state of the simulator can be manipulated universally in any desired way, we can mimic the behavior of the original system under arbitrary conditions and thus gain access to its behavior and properties. Unfortunately, simulating the quantum system in this way requires a functional universal quantum computer, something that is not yet available. The biggest challenge here is that the quantum state of the simulator is prone to unwanted interactions with the environment that decohere the qubits and scramble its quantum state. As the current quantum machines have no error correction implemented \cite{Terhal_revmodphys2015}, it is impossible to perform long gate sequences on the state of the simulator; such sequences are essential for quantum algorithms that require universal control of the state \cite{Kitaev_2002, Nielsen_2000}. 

In this work, we focus on another paradigm of quantum simulation, one that trades universal control for shorter gate sequences. A promising class of simulation algorithms that operate in this paradigm are called variational quantum eigensolvers (VQEs) \cite{Peruzzo2014,McClean_njp2016,Romero_qst2018,Lee_jctcomp2019}. These algorithms are based on the standard variational principle of quantum mechanics, which states that the ground state energy is a global minimum of the expectation value of the Hamiltonian. The main steps of any VQE algorithm are as follows. First of all, one needs to map the Hilbert space of the system onto that of the simulator. There are a variety of ways to do the mapping, the most common being  Jordan-Wigner, Bravyi-Kitaev, and parity mappings \cite{McArdle_qcc2020}. In this work, we focus on the Jordan-Wigner mapping. VQEs operate by preparing the simulator system in some initial state $\ket{\psi}$ and then applying a sequence of parametrized unitary gates, $\hat{U}(\vec{\theta})$, referred to as an ansatz. The goal is then to find the minimum of $E(\vec{\theta})=\braket{\psi|\hat{U}(\vec{\theta})^\dagger\hat{H}\hat{U}(\vec{\theta})|\psi}$ with respect to $\vec{\theta}$, where $\hat{H}$ is the problem Hamiltonian expressed in terms of operators acting on the simulator. According to the variational principle, this minimum is the best approximation to the ground state allowed by the chosen ansatz. The advantage of VQEs is that the quantum hardware is only used to prepare the ansatz and measure the expectation value of $\hat{H}$, thus computing $E(\vec{\theta})$ for a given $\vec{\theta}$. The actual minimization procedure is performed on a classical computer that effectively uses the quantum simulator as a subroutine to compute $E(\vec{\theta})$. 

The performance of any VQE algorithm depends critically on the choice of the ansatz. In general, an ansatz is constructed from a product of individual quantum gates, each generated by some anti-Hermitian operator. The first question that arises is thus, "which sets of anti-Hermitian operators can serve in a VQE ansatz?" There are two general approaches taken in the community: (i) Use products of fermionic operators that resemble those appearing in the Hamiltonian of the system under study \cite{Peruzzo2014,Taube_ijqc2006,Bartlett_chemphyslett1989,EvangelistaJChemPhys2019}. This operator choice is typically what is used in classical chemistry simulations, but it has the disadvantage that when mapped to qubit operators and transpiled into the native gates of a quantum simulator, the circuit significantly deepens, and the implementation of the algorithm becomes problematic (although there has been recent progress in reducing these circuit depths \cite{YordanovPRA2020}). An alternative approach is the following: (ii) Use sets of anti-Hermitian operators that are native or at least easily implementable on the quantum simulator \cite{Kandala_nature2017,Cerezo_naturecomm2021}. Regardless of which approach is taken, there is no unique way to choose an ansatz, and there is currently no general answer as to what constitutes the best ansatz for a given problem. 

An important step toward addressing these issues was taken in Refs.~\cite{Grimsley_naturecomm2019,Tang_PRXQuantum2021}, which introduced an algorithm known as ADAPT-VQE that allows the system under study to determine its own problem-tailored ansatz by constructing it step by step from a predefined operator pool. It was shown that this leads to substantial reductions in circuit depths and variational parameter counts. Other types of iterative variational algorithms have also been proposed \cite{RyabinkinJCTC2018,rattew2020domainagnostic,chivilikhin2020mogvqe,gomes2021adaptive}. The original work on ADAPT-VQE \cite{Grimsley_naturecomm2019} utilized pools comprised of fermionic operators (fermionic-ADAPT-VQE), while follow-up work \cite{Tang_PRXQuantum2021} considered anti-Hermitian Pauli strings instead (qubit-ADAPT-VQE). Because Pauli strings can be transpiled into a small number of native gates, this choice leads to shorter ansatz circuits. Pools comprised of operators that lie somewhere in between these two cases have also been shown to yield good performance \cite{yordanov2020iterative}. The advantages afforded by ADAPT-VQE come at the expense of an increase in measurement overhead. This comes from the need to perform additional measurements (beyond those needed to obtain $E(\vec\theta)$) of the energy gradient to determine which operator to add in each iteration of the algorithm. The number of additional measurements is proportional to the size of the operator pool. Refs.~\cite{Grimsley_naturecomm2019,Tang_PRXQuantum2021,yordanov2020iterative} used pools of size $\mathcal{O}(n^4)$, where $n$ is the number of qubits, to perform molecular simulations. Because the number of terms in $\hat H$ for a molecule is also quartic, the total number of measurements per iteration would naively scale as $\mathcal{O}(n^8)$ instead of the $\mathcal{O}(n^4)$ scaling of a fixed-ansatz VQE applied to a molecular problem. However, by reformulating the energy gradient in terms of the 3-particle reduced density matrix, Ref.~\cite{LiuJChemPhys2021} showed that this measurement cost is actually $\mathcal{O}(n^6)$.\footnote{By accepting some amount of approximation to the energy gradient expressions, they also showed that an approximate 2-particle reduced density matrix reconstruction of the 3-particle reduced density matrix admits a measurement count that scales the same as the energy itself, $\mathcal{O}(n^4)$. Although this approximate gradient estimate created significant convergence problems, they have also developed a technique to mitigate these problems called ADAPT-Vx.}

An important consideration in choosing a pool is that it should be possible to construct from it an ansatz that represents the exact ground state to arbitrary accuracy. Pools that satisfy this criterion are called "complete" \cite{Tang_PRXQuantum2021}.
In Ref.~\cite{Tang_PRXQuantum2021}, it was shown that the $\mathcal{O}(n^4)$ pool can be reduced by over 90\% without sacrificing the performance of qubit-ADAPT-VQE, provided the completeness property is maintained. It was further proven (by constructing an explicit example) that there exist complete pools of size $2n-2$, and it was conjectured that this is the minimal size of such pools. This would in principle reduce the measurement overhead for molecules from $\mathcal{O}(n^8)$ to $\mathcal{O}(n^5)$. However, while such pools perform well for random Hamiltonians \cite{Tang_PRXQuantum2021}, they have not yet been tested on molecules or other problems of practical interest, so this overhead reduction has not been confirmed. Many additional questions remain as well: Is $2n-2$ really the minimal possible pool size or are there even smaller complete pools? Can we find all possible minimal complete pools, and how can we check if a pool is minimal and complete or not? Is it possible to reduce the necessary resources even further if one takes into account the details of the problem being solved, such as symmetries of the Hamiltonian? How does the choice of the minimal complete pool affect the convergence of the ansatz in the context of ADAPT-VQE?

In this paper, we address all of these questions. We prove that the minimum size of a complete pool comprised of Pauli strings is indeed $2n-2$. We also derive necessary and sufficient conditions for pool completeness that facilitate the process of constructing new examples of minimal complete pools. We further show that while generic minimal complete pools work well for random Hamiltonians, they generally do not perform well for molecular Hamiltonians. We show that this is due to the presence of symmetries in the latter. If symmetry-adapted minimal complete pools are used instead, then qubit-ADAPT-VQE successfully obtains the desired ground state of molecular problems while achieving the reduced $\mathcal{O}(n^5)$ measurement overhead. Although our main motivation is to improve the performance of ADAPT-VQE, it is important to stress that our results are relevant to any VQE that utilizes ans\"atze built from Pauli strings. Our findings allow one to determine when such ans\"atze are capable of exactly representing the desired state.

The paper is organized as follows. In the next section, we prove that minimal complete pools contain $2n-2$ operators, and we establish necessary and sufficient conditions for completeness. In Sec. 3, we then apply our results to random real Hamiltonians. In Sec. 4, we discuss minimal complete pools in the context of molecular simulations. We show that generic minimal complete pools lead to convergence difficulties. We then show that this problem is resolved by incorporating symmetries into the pool, and we analyze the performance of ADAPT-VQE with these symmetry-adapted minimal complete pools for several different molecules.

\section{Minimal complete pools}\label{mainText:sec:MCP}

In this section we discuss our main results related to minimal complete pools (MCPs). This section only gives an overview of the main theorems and does not present all proofs in detail. For a more rigorous discussion please see Appendix \ref{app:Minimal complete pools}. In this work, operator pools are defined as sets of anti-Hermitian Pauli strings $\{\hat{P}_i\}$, each of which is capable of generating a parameterized unitary $\exp(\alpha\hat{P}_i)$. We call an operator pool complete if for any two real states $\ket{\psi}$ and $\ket{\phi}$ there exists a product of these unitaries that can transform one to the other:
\begin{eqnarray}\label{mainText:Eq:completeness_definition}
\ket{\psi}=\prod_i\exp(\alpha_i\hat{P}_i)\ket{\phi}.
\end{eqnarray}
 Here, we focus on real states because we are primarily interested in simulating systems that possess time-reversal symmetry, such as molecules. This means that the $\hat P_i$ should contain odd numbers of Pauli $Y$ operators. In order to not work with imaginary matrix entries, we will use $iY$ instead of ordinary Pauli $Y$. Still, for clarity we will omit writing the factor $i$ in all expressions. The Pauli strings containing odd numbers of $Y$ operators will be referred to as odd Pauli strings ($O$ operators). In contrast, the Pauli strings with even numbers of $Y$ operators will be referred to as even ($E$ operators). We call a complete pool minimal if there is no complete pool of smaller size. In this section, we show how to identify all MCPs and discuss their main properties.

The basic features of an MCP are summarized by the following theorem:
\begin{theorem}\label{mainText:theorem:canonical_form_of_minimal_complete_pools}
	An MCP must contain $2n-2$ Pauli string generators $O_1$, $O_2$, ..., $O_{2n-2}$. If we use matrix multiplication to build the set of all possible products of these operators (a product group generated by an MCP), this set will coincide with the product group $G$ generated by $Z_1$, $Z_2$, ..., $Z_{n-2}$, $Y_1$, $Y_2$, ..., $Y_{n-2}$, $Y_{n-1}$, $Z_{n-1}Y_n$ up to a similarity transformation. The Lie algebra generated by an MCP is the subset of odd strings from this product group.
\end{theorem}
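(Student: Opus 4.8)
The plan is to translate completeness into a statement about the Lie algebra $\mathfrak g$ generated by the pool, sandwich $\dim\mathfrak g$ between the pool size and the requirement of transitivity, and then read off the structure from $\mathbb F_2$ symplectic geometry. First I would use the standard fact that the set of unitaries $\prod_i\exp(\alpha_i\hat P_i)$ is exactly the connected Lie group $\mathcal G=\exp(\mathfrak g)\subseteq SO(2^n)$, where $\mathfrak g$ is the Lie algebra generated by the $\hat P_i$; hence the pool is complete precisely when $\mathcal G$ acts transitively on the sphere $S^{2^n-1}$ of real unit vectors. By the Montgomery--Samelson and Borel classification of transitive actions on spheres, every transitive connected subgroup of $SO(2^n)$ has dimension at least $\dim Sp(2^{n-2})=2^{2n-3}+2^{n-2}$ (for $N=2^n$ the exceptional cases $G_2,\mathrm{Spin}(7),\mathrm{Spin}(9)$ either cannot occur or have at least this dimension), so completeness forces $\dim\mathfrak g\ge 2^{2n-3}+2^{n-2}$.

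Next I would bound $\dim\mathfrak g$ from above. Since $[\hat P,\hat P']=\pm 2\hat P\hat P'$ for anticommuting Pauli strings, and since a product of two anticommuting odd strings is again a phase times an odd string (both sides being real and antisymmetric), nested commutators of the pool never leave the span of the odd Pauli strings contained in the subgroup $V\le\mathbb F_2^{2n}$ generated multiplicatively by the pool; thus $\dim\mathfrak g$ is at most the number of odd strings in $V$, and $\dim V\le|\mathrm{pool}|$. Identifying $X^aZ^b$ with $(a,b)$, the property ``odd'' is the quadratic form $q(a,b)=a\cdot b$ whose polar form is the commutation symplectic form. The key lemma is: any subspace $V\subseteq\mathbb F_2^{2n}$ of dimension $\le 2n-2$ contains at most $2^{2n-3}+2^{n-2}$ vectors with $q=1$, with equality only if $\dim V=2n-2$ and $q|_V$ is nondegenerate of minus type (Arf invariant $1$). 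I would prove this by noting that when $\dim V=2n-2$ the complement $V^{\perp}$ is only two-dimensional, so the radical of $b_q|_V$ has dimension $0$ or $2$: the nondegenerate case gives $2^{2n-3}\mp 2^{n-2}$, and in the degenerate case the radical must equal the two-dimensional $V^{\perp}\subseteq V$, so, the ambient form being of plus type, the induced form on $V/V^{\perp}$ is again of plus type (or the radical contributes only a linear term), in either case leaving strictly fewer odd vectors; for $\dim V<2n-2$ even the trivial bound $2^{\dim V}-1$ is below the threshold. Hence $|\mathrm{pool}|\ge 2n-2$, and together with the known existence of a complete pool of size $2n-2$ this shows every MCP has exactly $2n-2$ generators.

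For an MCP the inequalities $2^{2n-3}+2^{n-2}\le\dim\mathfrak g\le\#\{\text{odd strings in }V\}\le 2^{2n-3}+2^{n-2}$ collapse, forcing the generators to be $\mathbb F_2$-independent (so $\dim V=2n-2$), $q|_V$ to be nondegenerate of minus type, and $\mathfrak g$ to be the span of \emph{all} odd strings in $V$. Minus type means $V^{\perp}$ is a hyperbolic plane $\langle A,B\rangle$ with $A$ and $B$ odd and anticommuting, so $iA$, $iB$ and their product furnish a quaternionic structure on $\mathbb R^{2^n}$ whose commutant inside $\mathfrak{so}(2^n)$ is exactly the span of the odd strings of $V=\{A,B\}^{\perp}$, namely $\mathfrak{sp}(2^{n-2})$ --- this both confirms transitivity and identifies $\mathfrak g$. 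To reach the ``up to a similarity transformation'' statement, note that the listed generators of $G$ span a $(2n-2)$-dimensional subspace on which $q$ is nondegenerate of minus type (its Arf invariant comes entirely from the plane $\langle Y_{n-1},Z_{n-1}Y_n\rangle$); by Witt's extension theorem the isometry $V\to G$ --- available because $V^{\perp}$ and $G^{\perp}$ are also isometric minus-type planes --- extends to an orthogonal transformation of $\mathbb F_2^{2n}$ preserving $q$, which is realized by conjugation by a real (orthogonal) Clifford unitary. That unitary carries the product group $V$ onto $G$, and hence $\mathfrak g$, the span of the odd strings of $V$, onto the span of the odd strings of $G$.

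I expect the main obstacle to be the counting lemma together with the Arf-invariant bookkeeping: the entire structural conclusion rests on equality in the count forcing minus type, which in turn hinges on $\dim V=2n-2$ pinning the radical down to dimension at most two. Secondary care is needed to certify that the $\mathbb F_2$ orthogonal group of $q$ is faithfully realized by real Clifford conjugations. By contrast, the classification of transitive sphere actions, though a deep external input, enters only through the single bound $\dim\mathfrak g\ge 2^{2n-3}+2^{n-2}$, so its exceptional and low-dimensional coincidences require no separate treatment.
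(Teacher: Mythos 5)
Your proposal is essentially correct, but it takes a genuinely different route from the paper's. The paper argues entirely inside GF$(2)$ Pauli-group combinatorics: a perturbative argument around $\ket{00\ldots0}$ shows a complete pool's algebra must contain $2^n-1$ strings realizing every qubit flip, the algebra is trapped inside the odd part of the product group, and an explicit structural analysis of that group (normal $Z$-type subgroup brought to $Z_1,\ldots,Z_k$, at most two ``unpaired'' mutually anticommuting $Y$-type generators, hence at least $n+(n-2)=2n-2$ generators) followed by a hands-on sequence of $\exp(\pi O/4)$ and SWAP conjugations yields the canonical form; the algebra is only shown to be \emph{contained in} the odd strings of $G$, and sufficiency is imported from the explicit pool of Tang et al. Your route instead sandwiches the dimension of the generated Lie algebra between the Montgomery--Samelson/Borel bound $2^{2n-3}+2^{n-2}$ for transitive sphere actions and the count of odd vectors ($q=1$ vectors of the quadratic form refining the commutation symplectic form) in a subspace of dimension at most $2n-2$; your counting lemma and its equality analysis are correct as stated (the dangerous degenerate case is excluded exactly because the ambient form has plus type, as you note), and the collapse of the sandwich buys something strictly stronger than the paper proves analytically: the algebra of any complete $(2n-2)$-pool must equal the span of \emph{all} odd strings of the group, identified as the symplectic algebra commuting with the quaternionic structure spanned by $V^{\perp}$ --- i.e., the (a)$\Rightarrow$(c) direction of the completeness criterion, which the paper supports only numerically. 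Two points need shoring up. First, the sphere-action classification applies to compact (closed) transitive subgroups, whereas the set of finite products $\prod_i\exp(\alpha_i\hat P_i)$ is a priori only an integral, possibly non-closed, subgroup of $SO(2^n)$; before the dimension bound can be applied to the algebra generated by the pool you need either Montgomery's theorem (a connected Lie group acting transitively on a compact simply connected manifold has a maximal compact subgroup acting transitively) or an argument that the generated subgroup is closed. Second, the step you flagged --- realizing the Witt-extended GF$(2)$ isometry by a real Clifford conjugation --- does hold and matches the paper's allowed moves, because conjugation by $\exp(\pi P/4)$ with $P$ odd implements precisely the orthogonal transvection attached to $P$, and such transvections (supplemented by qubit permutations in the small four-dimensional exceptional case) generate the plus-type orthogonal group; this should be said explicitly, since Witt's theorem alone only supplies an abstract isometry rather than a similarity transformation in the paper's sense.
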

The similarity transformation referred to in the theorem above is defined as follows for any two odd Pauli strings $O_1$ and $O_2$ that anticommute:
\begin{equation}
	\exp\left(\frac{\pi}{4}O_1\right)O_2\exp\left(-\frac{\pi}{4}O_1\right)=\frac{1}{2}[O_1,O_2].
	\label{mainText:Eq: commutator_as_a_similatiry_transformation}
\end{equation}

Here, we briefly describe the main ideas behind the proof of Theorem \ref{mainText:theorem:canonical_form_of_minimal_complete_pools}. For a more detailed discussion, see Theorem \ref{theorem:canonical_form_of_minimal_complete_pools} and its proof in Appendix \ref{app:Minimal complete pools}. Completeness is related to the ability of the pool to generate a unitary that transforms the initial state of the system to the target ground state. The theory of Lie groups relates the unitaries a pool can generate to the size and structure of the Lie algebra generated by the pool through all possible commutators of its elements. We can understand this relationship in detail by taking advantage of the fact that our pool consists of Pauli strings. If any two Pauli strings do not commute, then their product coincides with their commutator up to a coefficient. This means that if we build a product group from a pool using matrix multiplication, then the Lie algebra will be a subset of this group. Therefore, if the product group is too small to host the operators necessary to generate the unitary we need, then the algebra will also be too small and so the pool cannot be complete. Thus in Appendix \ref{app:Minimal complete pools}, we first identify which operators we need in the group to be able to transform any real state to any other (as in Eq.~\ref{mainText:Eq:completeness_definition}), and then we use group theory to show that a minimal group that can host those operators must have the canonical form described in Theorem \ref{mainText:theorem:canonical_form_of_minimal_complete_pools}.
From now on, when we discuss an MCP, we will always assume it generates this canonical form of the product group $G$.

Theorem \ref{mainText:theorem:canonical_form_of_minimal_complete_pools} only considers the size and structure of the product group that an MCP must generate. This is a necessary condition for completeness, but at this point we have not yet shown that pools satisfying this condition exist. Such pools do in fact exist. In Appendix B of Ref.~\cite{Tang_PRXQuantum2021}, a pool of size $2n-2$ was constructed explicitly and shown to transform any real state into any other. One can check that up to renumbering of the qubits, this pool generates the group $G$ in Theorem \ref{mainText:theorem:canonical_form_of_minimal_complete_pools}.

If a pool generates an algebra that spans all odd strings from the group $G$ in Theorem \ref{mainText:theorem:canonical_form_of_minimal_complete_pools}, it will generate exactly the same algebra as the pool from Appendix B of \cite{Tang_PRXQuantum2021}, thus proving that this pool is also complete. This is a sufficient condition for completeness that is part of the following theorem, which is extremely useful when searching for minimal complete pools numerically:
\begin{theorem}[completeness criterion]
	Let a pool of $2n-2$ Pauli string generators $O_1$, $O_2$, ..., $O_{2n-2}$ generate the product group $G$ defined in Theorem \ref{mainText:theorem:canonical_form_of_minimal_complete_pools}. The following statements are equivalent:
	\begin{itemize}
		\item (a) The pool $O_1$, $O_2$, ..., $O_{2n-2}$ is complete.
		\item (b) The pool $O_1$, $O_2$, ..., $O_{2n-2}$ cannot be split into two mutually commuting sets.
		\item (c) The algebra generated by $O_1$, $O_2$, ..., $O_{2n-2}$ spans all odd strings from the group $G$.\\
	\end{itemize}
\label{mainText:theorem:completeness_criterion}
\end{theorem}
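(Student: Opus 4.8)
The plan is to prove the three equivalences via the cycle $(a)\Rightarrow(b)\Rightarrow(c)\Rightarrow(a)$, since $(c)\Rightarrow(a)$ is already essentially handed to us by the discussion preceding the theorem: if the generated algebra spans all odd strings of $G$, then it is literally the same Lie algebra generated by the explicit size-$2n-2$ pool of Appendix B of Ref.~\cite{Tang_PRXQuantum2021}, which was shown to be complete there. So the real content is $(a)\Rightarrow(b)$ and $(b)\Rightarrow(c)$.

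For $(a)\Rightarrow(b)$, I would argue the contrapositive. Suppose the pool splits as a disjoint union $\{O_i\}=A\sqcup B$ with every element of $A$ commuting with every element of $B$. Then every commutator, and iterated commutator, of pool elements stays within the subalgebra generated by $A$ alone or by $B$ alone (a bracket mixing an $A$-element and a $B$-element vanishes). Hence the Lie algebra generated by the whole pool is the direct sum $\mathfrak{g}_A\oplus\mathfrak{g}_B$ of two commuting subalgebras, each sitting inside the product group $G$. The group generated by $A$ commutes with the group generated by $B$, so there is a simultaneous "block" structure: one can exhibit a real state (e.g.\ built from a common eigenvector of the commuting $B$-part, or more cleanly a state stabilized by a nontrivial element of the centralizer structure) that the unitaries $\prod_i\exp(\alpha_i O_i)$ cannot move to an orthogonal target. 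The cleanest route is dimension counting on the orbit: the orbit of the initial state under $\exp(\mathfrak{g}_A\oplus\mathfrak{g}_B)$ has tangent space of dimension at most $\dim\mathfrak{g}_A+\dim\mathfrak{g}_B$, whereas full reachability of all real states (the orbit being the whole real projective sphere, dimension $2^{n-1}-1$) requires the odd part of $G$ to be generated. One then checks that a split pool cannot generate all of the odd part of $G$ — the two commuting halves each generate a proper subspace, and because commuting Pauli strings can be simultaneously "diagonalized" by Clifford conjugation, their combined span misses odd strings that entangle the two sectors. This step is where I expect to have to be careful.

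For $(b)\Rightarrow(c)$, assume the pool cannot be split into two mutually commuting sets; I want to conclude the generated algebra is all of $O(G)$, the odd strings in $G$. The key tool is Eq.~(\ref{mainText:Eq: commutator_as_a_similatiry_transformation}): for anticommuting odd Pauli strings, the commutator is (up to scalar) again an odd element of the product group, and anticommuting is exactly the condition under which the bracket is nonzero. So "closing the pool under brackets" is a purely combinatorial process inside the finite group $G$: start from the $2n-2$ generators and repeatedly take products of any two that anticommute. The non-splittability hypothesis says the "anticommutation graph" on the pool (vertices = pool elements, edges = anticommuting pairs) is connected — if it had two components with no edges between them, those components would be mutually commuting sets. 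Connectivity of this graph should let me show, by an inductive argument on qubits or on the canonical generators $Z_1,\dots,Z_{n-2},Y_1,\dots,Y_{n-1},Z_{n-1}Y_n$ of $G$, that the bracket-closure reaches every odd string of $G$: each step along an edge of the anticommutation graph lets me "propagate" through the group, and connectivity ensures nothing is left unreachable. I would make this precise by showing the reachable set is a subalgebra whose corresponding connected subgroup must, by a maximality/irreducibility argument, be all of the odd part of $G$.

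The main obstacle I anticipate is the bookkeeping in $(b)\Rightarrow(c)$: translating graph-connectivity of the anticommutation relation into the statement that bracket-closure exhausts $O(G)$ requires understanding precisely how products of odd Pauli strings generate the group $G$, and ruling out the possibility that a connected pool still gets "stuck" in a proper bracket-closed subspace that nonetheless is not a direct sum of commuting pieces. I would handle this by leveraging Theorem~\ref{mainText:theorem:canonical_form_of_minimal_complete_pools}: since the pool already generates all of $G$ as a \emph{product} group (this is a standing assumption of the theorem), every odd string of $G$ is \emph{some} product of pool elements; the job is then to show any such product can be rewritten, using connectivity, as an iterated bracket (equivalently, that the even strings needed as "intermediate" factors can always be bypassed). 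A parity/counting argument — the odd strings are an index-two subset of $G$, and a connected anticommutation graph forces enough anticommuting pairs to generate the full odd coset — should close the gap, and this is the step I would write out most carefully.
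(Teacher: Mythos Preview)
Your plan diverges from the paper in one crucial way: the paper does \emph{not} prove $(b)\Rightarrow(c)$. In both the main text and the appendix the authors state explicitly that ``proving that (c) follows from (b) turns out to be very challenging. For now we do not have an analytical proof, but all our numerical calculations confirm this is true.'' So Theorem~\ref{mainText:theorem:completeness_criterion} in this paper is a hybrid: $(c)\Rightarrow(a)$ and $(a)\Rightarrow(b)$ are proved, while $(b)\Rightarrow(c)$ is left as a numerically supported conjecture. Your proposal treats $(b)\Rightarrow(c)$ as something to be proved, and the obstacle you yourself flag is exactly the one the authors could not overcome: connectivity of the anticommutation graph on the $2n-2$ generators does \emph{not} obviously force bracket-closure to exhaust all odd strings of $G$. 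A connected graph guarantees that brackets mix the generators, but it does not rule out a proper bracket-closed Lie subalgebra of $O(G)$ that is nonetheless not a direct sum of commuting pieces. Your final ``parity/counting argument'' is a hope, not an argument, and the paper's authors evidently did not find one either.

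For $(a)\Rightarrow(b)$ your contrapositive strategy is correct in spirit but your execution is much vaguer than the paper's. The paper (Theorem~\ref{theorem: inseparability_criterion} in the appendix) argues as follows: if the pool splits as $A\sqcup B$, then each of $A$, $B$ has fewer than $2n-2$ generators and so, by the necessary condition (Theorem~\ref{theorem:necassary_condition_of_completeness}), each subgroup fails to contain some qubit-flipping pattern; using Lemma~\ref{lemma:two_paulis_flipping_same_qubits} and the commutation hypothesis they show that if a flipping is absent from $A$ as an odd string it is absent entirely, hence must live in $B$, and vice versa; then the product of the two ``missing'' flippings gives a flipping absent from both $A$ and $B$, so the full pool violates the necessary condition and is incomplete. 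This is a concrete combinatorial argument about flippings, not an orbit-dimension count. Your dimension-counting route would require you to bound $\dim\mathfrak{g}_A+\dim\mathfrak{g}_B$ strictly below $|O(G)|$, which you have not done; the paper's argument sidesteps this entirely.
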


Statement (b) is a necessary condition of completeness and thus follows from (a), as shown in Appendix \ref{app:Minimal complete pools}. What this statement means is that we cannot split the pool into two sets of operators, such that each operator from the first set commutes with each operator from the second set. As already discussed above in the context of Theorem \ref{mainText:theorem:canonical_form_of_minimal_complete_pools}, statement (a) follows from (c). Proving that (c) follows from (b) turns out to be very challenging. For now we do not have an analytical proof, but all our numerical calculations confirm this is true, and we use this statement in practice.

\begin{remark}
Theorem \ref{mainText:theorem:completeness_criterion} consists of completeness criteria that are in part proven analytically and in part strongly supported numerically. It coincides with Theorem \ref{theorem:completeness_criterion} of Appendix \ref{app:Minimal complete pools}. The condition (c) can safely be used to search for complete pools, as its applicability has been proven analytically. However, computing the Lie algebra for a given pool is very resource demanding. If one needs to check many pools for completeness and select one based on some other criterion, this approach might take too long. This is why condition (b) is extremely useful, as its computational complexity scales polynomially with the size of the pool, and thus it is a lot easier and faster to use.
\end{remark}
\begin{remark}
The above theorems and observations lead to a practical recipe to search for complete pools. The problem statement is the following: given a pool of $2n-2$ Pauli strings, check whether it is complete or not. 
\begin{itemize}
	\item Step 1. Generate the product group and check if it coincides with that of Theorem \ref{mainText:theorem:canonical_form_of_minimal_complete_pools} up to a similarity transformation. This can be done using Theorem \ref{theorem:necassary_condition_of_completeness} from Appendix \ref{app:Minimal complete pools}, i.e., by checking that the group contains $2^n-1$ odd Pauli strings that perform all possible flippings of the qubits. \\
	\item Step 2. If the group is correct, the next step is to check if the pool obeys the inseparability criterion (condition (b) of Theorem \ref{mainText:theorem:completeness_criterion}).\\ 
	\item Step 3. One could already stop here. However, if one wants to rely on a fully analytical completeness proof, one needs to compute the algebra generated by the pool and check that it spans all odd strings from the group $G$. Equivalently, one needs to check that the algebra size is $\frac{2^{n-1}(2^{n-1}+1)}{2}$, in agreement with Lemma \ref{lemma:number_of_odd_strings_in_complete_group} of Appendix \ref{app:Minimal complete pools}. \\
\end{itemize}
\label{remark: steps_to_check_pools_for_completeness}
\end{remark}

\section{ADAPT-VQE with minimal complete pools for random Hamiltonians}
Now we will apply the results of the previous section to dense random Hamiltonians. Thus, we will assume the simulator can perform the gates $\exp(\alpha_i \hat{P}_i)$, parametrized by a real number $\alpha_i$, where $\hat{P}_i$ is taken from an operator pool $\{\hat{P}_1, \hat{P}_2, ..., \hat{P}_k\}$. In order to perform a simulation, we need to specify how we will construct the actual ansatz. We follow the protocol of ADAPT-VQE, which, in each iteration of the algorithm, selects the pool operator that has the largest energy gradient and adds it to the ansatz \cite{Grimsley_naturecomm2019,Tang_PRXQuantum2021}. We first initialize the system in a classical reference state $\ket{\psi^{(0)}}$. For molecules, this is typically a Hartree-Fock state, while for random Hamiltonians, we just initialize all qubits to the state $\ket{0}$. We then select an operator $\hat{P}_i$ from the pool by measuring the commutator $\braket{\psi^{(0)}|\big[\hat{H},\hat{P}_i\big]|\psi^{(0)}}$ for each pool operator and choosing the largest one. The ansatz then becomes $\ket{\psi(\alpha_i)}=\exp(\alpha_i \hat{P}_i)\ket{\psi^{(0)}}$.  The commutators are proportional to the derivative of the energy $\braket{\psi(\alpha_i)|\hat{H}|\psi(\alpha_i)}$ with respect to $\alpha_i$, justifying this selection criterion. In the next step we minimize $\braket{\psi(\alpha_i)|\hat{H}|\psi(\alpha_i)}$ with respect to $\alpha_i$ using a standard VQE procedure, involving both quantum and classical hardware. The resulting state $\ket{\psi^{(1)}}=\ket{\psi(\alpha_i^*)}=\exp(\alpha_i^* \hat P_i)\ket{\psi^{(0)}}$ then replaces $\ket{\psi^{(0)}}$ in the next iteration. We use $\ket{\psi^{(1)}}$ in the gradient criterion to choose the next operator from the pool. Our ansatz will then take the form $\ket{\psi(\vec{\alpha})}=\exp(\alpha_j \hat P_j)\exp(\alpha_i \hat P_i)\ket{\psi^{(0)}}$. Note that after choosing the next operator from the pool, we unfreeze the coefficient $\alpha_i$ and optimize with respect to the vector $\vec{\alpha}=(\alpha_j,\alpha_i)$. In real simulations on quantum hardware, the loop terminates when the energy gradient is zero or when the energy does not decrease significantly after several steps of the algorithm. Here, however, we focus on classically tractable problems to test the performance of various pools; in this case, we can compare the resulting energy to the true (FCI) ground state energy obtained from exact diagonalization and terminate the loop when the difference between them is below a predefined threshold. This can only work of course for small molecules or Hamiltonians that can be diagonalized numerically.

To illustrate the theory from the previous section, we randomly generate real Hamiltonians for $n=6$ and $n=8$ qubits. We use MCPs selected randomly. To create these, we take random sets of $2n-2$ operators and check if they generate the correct product group and obey the inseparability criterion. To be on the safe side, we also generate the algebra and make sure it spans all odd Pauli strings from the product group (following the steps in the second remark below Theorem \ref{mainText:theorem:completeness_criterion}). In the case of 6 qubits, one such pool has the following 10 operators:
\begin{equation}
	\begin{aligned}
		&\text{XZIIXY}, \text{ZXYZII}, \text{YZYYII}, \text{YYIIXY}, \text{IZXXZY},\\
		&\text{XZIXZY}, \text{ZYIYYI}, \text{XIYYYI}, \text{YIYZYI}, \text{XYZYYI},\\
	\end{aligned}
	\label{Pool:random_six_qubit_Hamiltonian}
\end{equation}
while in the case of 8 qubits, the pool has 14 operators, for example:
\begin{equation}
	\begin{aligned}
		&\text{ZYIZIYZY}, \text{ZXXZYYYI}, \text{YZIIIXII}, \text{YZXYIIXY},\\
		&\text{IIXXIIYI}, \text{IYYYZZII},\text{IYXZIYZY}, \text{ZXZIIXYI},\\
		&\text{YYZZZIYI}, \text{YIXYZZXY},\text{IIXXXIYI}, \text{IYXXIYXY},\\
		& \text{ZYIXIXII}, \text{XYXIZZII}.
	\end{aligned}
	\label{Pool:random_eight_qubit_Hamiltonian}
\end{equation}
\begin{figure}[h]
	\centering
	\includegraphics[width=0.4\textwidth]{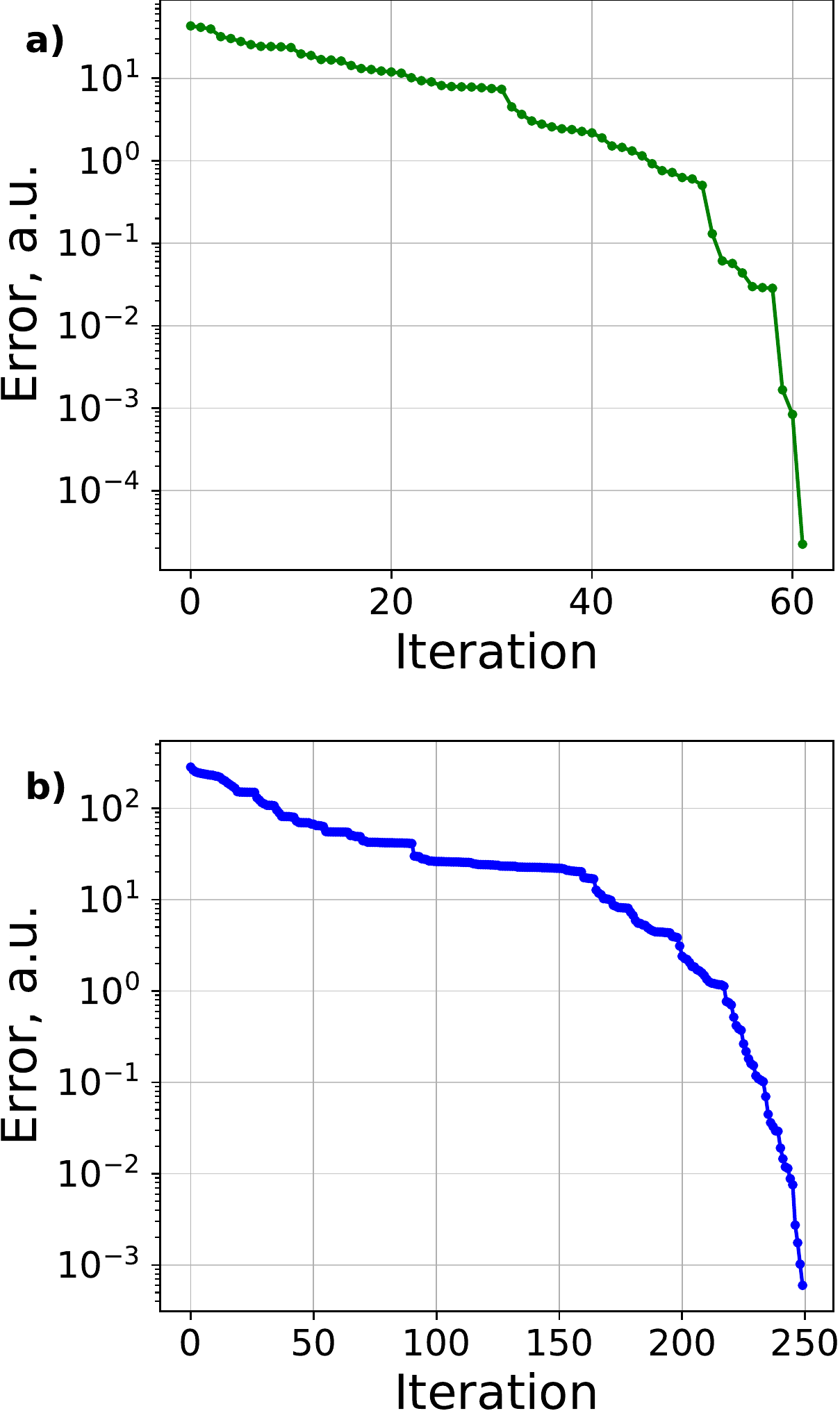}
	\caption{Energy error versus ADAPT-VQE iteration number for random Hamiltonians for (a) $n=6$ and (b) $n=8$ qubits, and randomly chosen pools (see Eqs.~\eqref{Pool:random_six_qubit_Hamiltonian} and \eqref{Pool:random_eight_qubit_Hamiltonian}). The error is defined as the difference between the energy at a particular iteration and the FCI energy.}
	\label{Fig:random_hamiltonians_error_vs_iteration}
\end{figure}
	Figure~\ref{Fig:random_hamiltonians_error_vs_iteration} shows the energy error (relative to the FCI energy) versus iteration number for two examples of randomly generated Hamiltonians. The curve follows a similar pattern for the two cases considered, and ADAPT-VQE manages to decrease the error by six orders of magnitude in each case. 
	It is evident from the figure that the number of parameters needed to reach convergence quickly increases from $\sim60$ to $\sim250$ as the number of qubits increases from $n=6$ to 8. This is a consequence of the randomness of the Hamiltonians. The lack of structure in the ground states of such Hamiltonians means that they do not admit an efficient representation, and the number of parameters needed likely scales with the Hilbert space dimension. We will see below that for physical systems such as molecules, where the Hamiltonian and ground state possess structure, the number of parameters needed to reach convergence is much smaller than the Hilbert space dimension. There, we will see that a judicious choice of MCP can significantly speed up the algorithm and further reduce parameter counts. In fact, choosing a random MCP to search for the ground state energy only works for random Hamiltonians. Real molecular Hamiltonians have a lot of structure due to symmetries. In this case, a random MCP will fail to achieve convergence, and one has to take into account symmetries of the problem Hamiltonian to construct a suitable MCP that is capable of producing the target ground state. We discuss these issues in the next section.

\section{Symmetry-preserving minimal complete pools for molecular simulations}
In this section, we apply MCPs to search for the ground state of molecular Hamiltonians using qubit-ADAPT-VQE \cite{Tang_PRXQuantum2021}. We begin with a discussion of the H$_4$ molecule, before developing a more general theory of molecular simulations using MCPs. H$_4$ is a linear molecule that consists of four protons and four electrons. For simplicity, we only retain four 1s molecular orbitals (one for each hydrogen atom) to span the state space. This leaves us with eight spin orbitals that, under the Jordan-Wigner mapping, translate to eight qubits. For the initial state, we choose the classical Hartree-Fock state. When we run qubit-ADAPT-VQE with a random MCP, we encounter a problem, namely the algorithm does not start. In particular, all the gradients in the first step are exactly zero, and no operator is identified for inclusion into the ansatz. We note that this problem is specific to the gradient criterion of ADAPT-VQE, as in principle any MCP can, according to our theory, take us to the ground state. The gradient criterion of ADAPT-VQE in this case is just unable to provide the corresponding path. This can be explained simply by the very sparse structure of the Hamiltonian. Indeed, the number of terms in molecular Hamiltonians is quartic in the number of qubits ($n^4$) due to the fact that such Hamiltonians only contain one and two-particle terms. This sparse structure makes it likely that terms in the Hamiltonian will commute with the operators from the pool, thus making the energy gradient zero. Still, the fact that \textit{all}  $n^4$ terms commute with all the operators from the pool must have a different explanation. If we just put $n^4$ random Pauli strings in a Hamiltonian, the probability that all of them would commute with the pool would be extremely low. That means there is a pattern among the Pauli strings that comprise the Hamiltonian, and it turns out this pattern is completely defined by the Hamiltonian symmetries. 

First of all, every molecular Hamiltonian conserves the spin and the number of particles in the system. This means the Hamiltonian has vanishing matrix elements between states of different spin or different particle number. The H$_4$ molecule has an additional symmetry on top of these, namely inversion symmetry. This implies all eigenstates of the Hamiltonian are either symmetric or antisymmetric with respect to inversion, and the Hamiltonian will have vanishing matrix elements between any two states of different parity. This can be more strictly formulated using the theory of group representations; we employ this approach later, but for now we only discuss H$_4$ in these simple terms. We start the simulation at the Hartree-Fock state, which has four electrons, zero spin, and a well defined parity. In terms of qubits in the simulator, this state can be expressed as $\ket{\psi_{HF}}=\ket{11110000}$. According to the discussion above, if a Pauli string $\hat{P}$ changes the number of particles, spin or parity of this state, its  commutator  with the Hamiltonian is exactly zero:
\begin{equation}
	\begin{aligned}
 &\braket{\psi_{HF}|[\hat{P},\hat{H}]|\psi_{HF}}=\\
 &=\braket{(\hat{P}\psi_{HF})|\hat{H}|\psi_{HF}}-\braket{\psi_{HF}|\hat{H}|(\hat{P}\psi_{HF})}=0.\\
	\end{aligned}
\label{Eq:Zero_commutator_1}
\end{equation}
This suggests that when selecting a pool, at least some operators in it must conserve the number of particles, spin, and parity of the Hartree-Fock state in order for the algorithm to start. These are not all the limitations we must consider when constructing the pool. Another limitation comes from the fact that a Hartree-Fock state is the classical state that lies closest in energy to the ground state. If we consider an operator $\hat{P}$ that conserves all the quantum numbers and at the same time is a single-particle excitation (contains exactly two $X$ or $Y$ operators), the commutator $\braket{\psi_{HF}|[\hat{P},\hat{H}]|\psi_{HF}}$ would again be zero. Otherwise one would be able to construct a different classical state with energy lower than that of the Hartree-Fock state. Indeed, let us assume $\braket{\psi_{HF}|[\hat{P},\hat{H}]|\psi_{HF}}$ is nonzero for $\hat{P}=IIIYIIIX$. That means the matrix element
\begin{equation}
	\begin{aligned}
		&\braket{\psi_{HF}|\hat{H}\hat{P}|\psi_{HF}}=\braket{11110000|\hat{H}|11100001}
	\end{aligned}
	\label{Eq:Zero_commutator_2}
\end{equation}
is nonzero as well. In that case there must exist a superposition state, 
\begin{equation}
	a\ket{11110000}+b\ket{11100001},
\end{equation}
that has lower energy than the Hartree-Fock state $\ket{11110000}$. This would mean that one electron resides in a superposition of orbitals 2 and 4, and if one constructs a new basis containing this superposition state, a product state with energy lower than that of the Hartree-Fock state would be obtained (an extension of the Brillouin condition to Pauli operators). 

Finally, the commutator $\braket{\psi_{HF}|[\hat{P},\hat{H}]|\psi_{HF}}$ will vanish if we consider a $\hat{P}$ that creates more than double excitations on top of the Hartree-Fock state. This is a consequence of the molecular Hamiltonian containing only single and double excitations, so it cannot connect for example the states $\ket{11110000}$ and $\ket{00001111}$. 

These rules fully define the operators that will have nonzero gradient when qubit-ADAPT-VQE starts. More generally, these operators can quickly lower the energy at the beginning of the algorithm, so it makes sense to include them in the pool even when some other VQE procedure is used that does not directly depend on the energy gradients.  We refer to the operators obeying the rules above as "starters". They allow the state to evolve away from the Hartree-Fock state in the initial steps of the algorithm, thus providing Pauli contributions to the "first-order interacting space", which is typically expressed in terms of second-quantized fermionic operators. In our simulations, we choose at least half of the operator pool to consist of starters. 

Even if we choose the pool to contain some starters and then add in other operators to satisfy the completeness criterion, we run into a different problem. It turns out all the gradients will often zero out again, before the ADAPT-VQE algorithm reaches the ground state. Figure~\ref{Fig:MCP_no_symmetry_but_starters} demonstrates this behavior in the case of H$_4$. An MCP of 14 operators in total is used, 7 of which are starters. The energy error does not improve beyond $2\times10^{-3}$ Ha as the gradient goes to zero. Although we do not show it here, similar behavior also arises for LiH and BeH${}_2$.
\begin{figure}[h]
	\centering
	\includegraphics[width=0.45\textwidth]{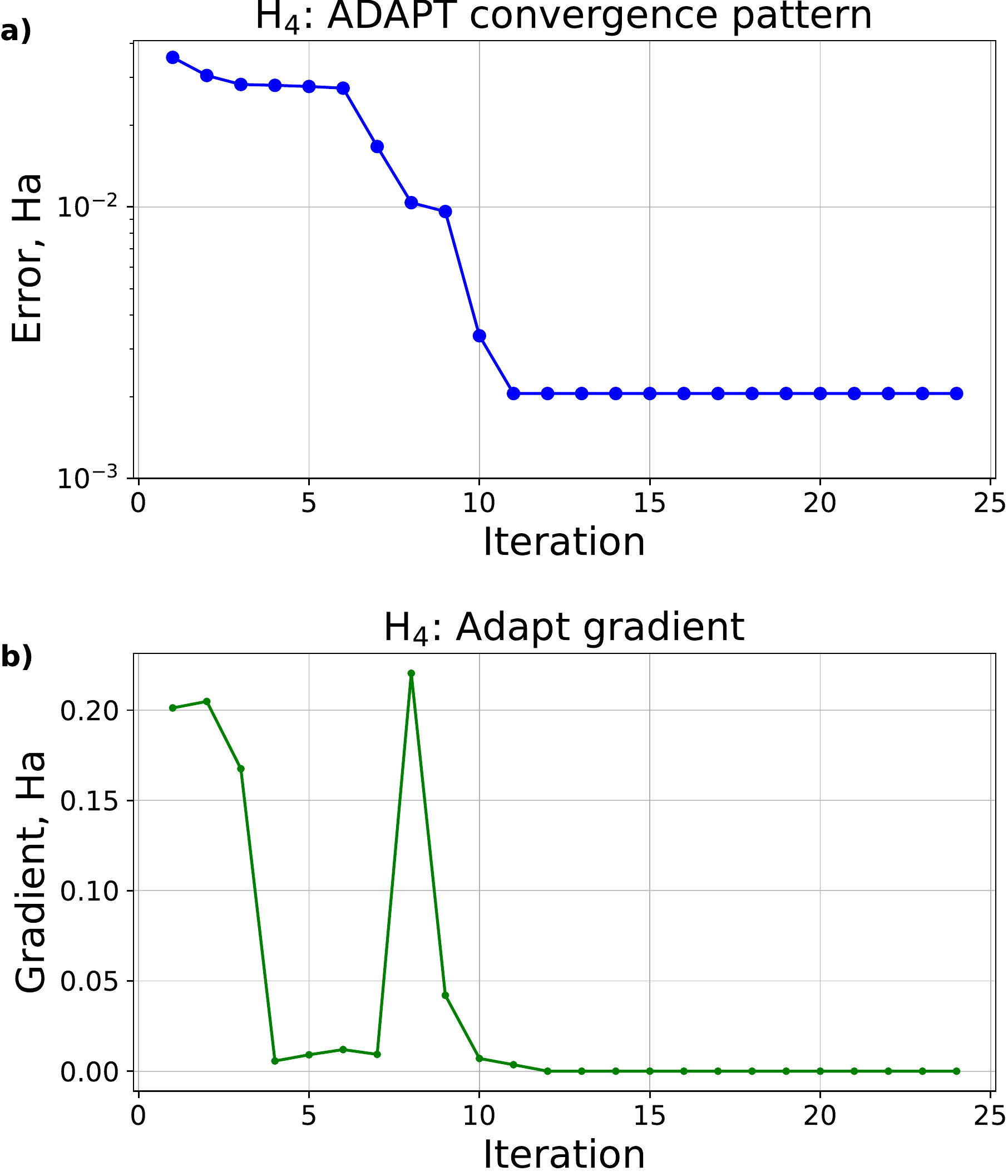}
	\caption{The absolute error (a) and maximal gradient (b) at each step of the qubit-ADAPT-VQE algorithm for the H$_4$ molecule mapped onto 8 qubits. The pool is chosen to be a 14-operator MCP with 7 starters, in which symmetry is not taken into account. The gradient goes to zero when the absolute error is only $2\times10^{-3}$ Ha. }
	\label{Fig:MCP_no_symmetry_but_starters}
\end{figure}
This can again be understood as a consequence of symmetry, as we now explain. 

Let us again focus for the moment on the H$_4$ molecule and introduce several new concepts. We will refer to a state having an even/odd number of electrons as a state of positive/negative particle-number parity. This new parity comes on top of the usual parity related to spatial inversion. Now one can show that every Pauli string that changes at least one of these parities will always have zero gradient and will never be picked. This can be proved by induction on the iterations of the ADAPT-VQE algorithm. Let us assume the statement is true for the first $N$ steps of the algorithm. That is, no operator that would change one of the parities was picked, so we know that after $N$ steps of the algorithm, the parities will still coincide with those of the initial Hartree-Fock state. Without loss of generality, we can assume these parities to be positive so that the total parity state is $\ket{+,+}$. The Hamiltonian will only connect this state to other states of the same parity. Note that when acting on a state for which one of the parities is a good quantum number, a Pauli string will either change this parity or leave it intact. Suppose we compute an energy gradient for a Pauli string that changes parities, so for example $\ket{+,+}\rightarrow\ket{+,-}$. Then
\begin{equation}
	\begin{aligned}
		&\braket{+,+|[\hat{H},\hat{P}]|+,+}=\\
		&2\text{Im}\braket{+,+|\hat{H}|+,-}=0.
	\end{aligned}
	\label{Eq:Zero_commutator_3}
\end{equation}
This means such a Pauli string will not be picked, and our statement is also true for the step $N+1$. The base of induction is the fact that at step zero of the algorithm, no operator was picked at all, so the statement is true. 

Now we can explain why ADAPT-VQE stops before it converges, even though we are using a complete pool (more specifically, an MCP). Our definition of completeness requires that we can reach any point in the Hilbert space from any other point. That means we must have operators that will change parities. But they are never picked by the ADAPT-VQE gradient criterion and thus are never added to the ansatz, so effectively our pool is smaller and thus it might be incomplete in practice. We again note that this result does not mean that there is no path to the ground state, it just means the ADAPT-VQE gradient criterion is not able to find the path in this case; another ADAPT-VQE criterion or other  VQE algorithms could in principle succeed with the same pool. However, resorting to a different operator-selection criterion or algorithm is not necessary, as we can actually still use the ADAPT-VQE gradient criterion to solve our problem if we notice that the MCP requirement that we can go from any point in the Hilbert space to any other point is actually too strict. Indeed, going from the Hartree-Fock state to the ground state can be written as
\begin{equation}
	\begin{aligned}
		&\ket{\psi_{gs}}=\exp(\hat{M})\ket{\psi_{HF}},
	\end{aligned}
	\label{Eq:path_to_the_ground_state}
\end{equation}
where $\hat{M}$ is a real antisymmetric matrix. This matrix, according to Lie group theory, is a superposition of operators from the complete algebra generated by the MCP. On the other hand, the operators present in this superposition obey all the symmetry requirements, namely they do not change the number of particles, the spin, or the parity of the Hartree-Fock state (this is a result of the fact that these quantum numbers are exactly the same for the Hartree-Fock state as they are for the ground state of the system). But then they will also keep both parities intact. Consider the subgroup of the product group and the subalgebra of the complete algebra that consist of operators conserving the parities. It follows that $\hat M$ is a superposition of the operators from this subalgebra (and thus from the subgroup). It is now clear that the most efficient pools should consist of odd Pauli strings that belong to this subalgebra and thus preserve parity. Due to the binary nature of the parities, the size of the subgroup is 4 times smaller than that of the group generated by a generic MCP. That means we can expect the parity-preserving pool of operators to contain 2 fewer operators compared to a general MCP. The importance of restricting ansatz generators according to symmetry considerations was also emphasized in Ref.~\cite{RyabinkinJCTC2020} in the context of the Qubit Coupled Cluster algorithm \cite{RyabinkinJCTC2018}. 

Is this parity-preserving pool the smallest that will work or can we reduce the size of the pool even further? If we could do this, then it would mean the matrix $\hat{M}$ is a superposition of Pauli strings that belong to an even smaller subalgebra of the original algebra. This turns out to be exactly the case, as we only used parities in our above considerations and have not yet taken into account spin symmetry ($\hat{S}_z$). In order to do so, let us first notice that Pauli strings come in one of two symmetry types. The first type are those Pauli strings that are not able to conserve spin and particle number at the same time for any classical state. An example of such a Pauli string in the four-qubit case is $\text{XYII}$ if the spin orbitals are ordered with alternating spin, e.g., $\alpha\beta\alpha\beta$, where $\alpha$ and $\beta$ represent the $1/2$ and $-1/2$ spin projection, respectively, for the electron occupying the corresponding orbital. Indeed, this Pauli string will only conserve the number of particles if the classical state belongs to one of the two cases:
\begin{equation}
    \ket{01...}\qquad
    \text{or}\qquad
    \ket{10...}.
\end{equation}
In either of these cases, the spin will change by 1, so particle number and spin cannot be conserved at the same time by this Pauli string for any classical state. The second type of Pauli strings are those that are able to conserve spin and particle number at the same time. These are exactly those Pauli strings for which the number of $X$ or $Y$ operators acting on the $\alpha-$orbitals is a multiple of two (same is true for $\beta$ orbitals). In other words, $X$ and $Y$ operators come in pairs for a given spin projection. In our four-qubit case, an example of such a Pauli string is $\text{YIXI}$, as it contains two $X$ or $Y$ Pauli operators acting on the $\alpha$ orbitals and no operators acting on the $\beta$ orbitals. This observation makes it clear that Pauli strings of the second type form a group under matrix multiplication. The matrix $\hat{M}$ conserves the spin and particle number of the Hartree-Fock state, so it must be a superposition of Pauli strings belonging to the second type. That means the parity-preserving subgroup we identified above can be further restricted by these conditions. This reduces the size of the symmetry-preserving subgroup by another factor of 2, and hence reduces the pool size by one more operator.  

We can now precisely formulate how to choose the pool in order to ensure convergence to the target ground state. It must generate the symmetry-preserving product subgroup and subalgebra that we described above, as well as contain enough starters to allow ADAPT-VQE to start. In other words, the pool must obey the following conditions:
\begin{enumerate}\label{enum:symmetry_constraints_on the pool}
    \item The number of electrons with a given spin changes by a multiple of 2. That is, there is an even number of $X$, $Y$ operators acting on $\alpha$ orbitals, and there is an even number of $X$, $Y$ operators acting on $\beta$ orbitals.
    \item Each operator in the pool must conserve spatial parity. This condition will be generalized below for more complicated molecules.
    \item The pool must contain enough starters for ADAPT-VQE to start.
    \item The pool generates the biggest subgroup and subalgebra of those generated by a general non-symmetry-preserving MCP, that contain Pauli strings obeying conditions 1-2.
\end{enumerate}
We refer to pools that satisfy these criteria as symmetry-preserving MCPs.

 We now construct a pool obeying the conditions above for the H$_4$ molecule. Due to the binary nature of parity, the size of the subgroup is 8 times smaller than that of the group generated by the general, non-symmetry-preserving MCP. The pool we construct thus contains 3 fewer operators than the general MCP, which leaves us with 11 operators in the pool (see Eq.~\eqref{Pool:random_eight_qubit_Hamiltonian}). We do not impose any further restrictions on the choice of the pool---the remaining choices are conducted randomly. For our simulation of H$_4$ dissociation curve (Fig.~\ref{Fig:H4_dissociation_curve}), we use the following symmetry-preserving MCP:
\begin{equation}
	\begin{aligned}
		&\text{YIXIYIYI}, \text{ZYXIYIZY}, \text{YIZYXIZY}, \text{ZZYXYYII},\\
		&\text{XXIZIIXY}, \text{YIZYZXYI},\text{XIYZYZYI}, \text{XZIIYZII},\\
		&\text{ZXXZZXYI}, \text{XXIIIIXY}, \text{IYYZXIZY}.
	\end{aligned}
	\label{Pool:H4}
\end{equation}

Let us analyze this pool in more detail. In order to do that we have to first understand how our basis orbitals look in terms of symmetry. If we choose the Hartree-Fock state of the H$_4$ molecule to have the first four qubits in state $\ket{1}$, $\ket{\psi_{HF}}=\ket{11110000}$, the ordering of the orbitals in terms of parity and spin can be chosen to be
\begin{equation}
    \ket{\underbrace{\alpha\beta}_{\text{+}}\underbrace{\alpha\beta}_{\text{-}}\underbrace{\alpha\beta}_{\text{+}}\underbrace{\alpha\beta}_{\text{-}}}.
    \label{eq:H4_molecular_orbitals}
\end{equation}
The underbrace shows whether the corresponding orbital has a positive or negative spatial parity. Looking at the orbital structure in Eq.~\eqref{eq:H4_molecular_orbitals}, one infers that all the operators from the pool in Eq.~\eqref{Pool:H4} except the 8th one, $\text{XZIIYZII}$, are starters. Indeed, they all conserve the number of particles, spin and parity of the Hartree-Fock state, as well as contain exactly four $X$ and $Y$ operators. The operator $\text{XZIIYZII}$ is a single-excitation operator, so it is not a starter even though it conserves all three symmetries (Brillouin condition). In Fig.~\ref{Fig:H4_dissociation_curve} we show the dissociation curve for the H$_4$ molecule, computed using the pool from Eq.~\eqref{Pool:H4}. At all bond lengths, ADAPT-VQE converges to FCI with an error less than $10^{-8}$ Ha in about 60 steps. The symmetry-preserving pool clearly outperforms the symmetry-violating MCP used in Fig.~\ref{Fig:MCP_no_symmetry_but_starters} even though the symmetry-preserving one contains fewer operators. This clearly highlights the importance of incorporating symmetry considerations into the pool. 

To further examine the role of starters, we can construct and compare several MCPs containing different numbers of them following the approach used to construct the pool in Eq.~\eqref{Pool:H4}. Using the three starters
\begin{equation}
	\begin{aligned}
		&\text{ZYXZZYYI}, \text{YZIYZXYI}, \text{XYZZYYII},
	\end{aligned}
	\label{Pool:H4_3_starters}
\end{equation}
gives rise to the following MCP:
\begin{equation}
	\begin{aligned}
		&\text{ZYXZZYYI}, \text{XIIIYZII}, \text{YZIYZXYI}, \text{ZIXZXXZY},\\
		&\text{XYZZYYII}, \text{ZXXXYZII},\text{IXYYZXXY}, \text{IZZXIZZY},\\
		&\text{YYXIXXYI}, \text{XYXXZIII}, \text{ZYXXIYXY}.
	\end{aligned}
	\label{Pool:H4_3_starters_pool}
\end{equation}
On the other hand, using the six starters
\begin{equation}
	\begin{aligned}
		&\text{YIIXYZZY}, \text{IXXZXIZY}, \text{ZZXYYYII},\\
		&\text{YZZYXZZY}, \text{IXXZIXYI},\text{YYZIIIXY},
	\end{aligned}
	\label{Pool:H4_6_starters}
\end{equation}
leads to a different MCP:
\begin{equation}
	\begin{aligned}
		&\text{YIIXYZZY}, \text{IXXZXIZY}, \text{ZZXYYYII}, \text{YZZYXZZY},\\
		&\text{IXXZIXYI}, \text{YYZIIIXY},\text{ZIXIIZYI}, \text{IYYYIYXY},\\
		&\text{YIZIXZII}, \text{ZZXIZZYI}, \text{ZXZXXIYI}.
	\end{aligned}
	\label{Pool:H4_6_starters_pool}
\end{equation}
The following nine starters
\begin{equation}
	\begin{aligned}
		&\text{ZYYZXIZY}, \text{IXXIXZZY}, \text{YIZXIYYI},\\
		&\text{IXYIYIZY}, \text{IZYXYYII},\text{XXIZXYII},\\
		&\text{YYIZXYII},
		\text{IXZYIYZY}, \text{XXZIXYII},
	\end{aligned}
	\label{Pool:H4_9_starters}
\end{equation}
yield this MCP:
\begin{equation}
	\begin{aligned}
		&\text{ZYYZXIZY}, \text{IXXIXZZY}, \text{YIZXIYYI}, \text{IXYIYIZY},\\
		&\text{IZYXYYII}, \text{XXIZXYII},\text{YYIZXYII}, \text{IXZYIYZY},\\
		&\text{XXZIXYII}, \text{XZYIZYZY}, \text{IIIXXXYI}.
	\end{aligned}
	\label{Pool:H4_9_starters_pool}
\end{equation}

We use the pools in Eqs.~\eqref{Pool:H4}, \eqref{Pool:H4_3_starters_pool}, \eqref{Pool:H4_3_starters_pool}, \eqref{Pool:H4_9_starters_pool} to calculate the ground state energy of the H$_4$ molecule at the minimum of the dissociation curve, with the results shown in Fig.~\ref{Fig:H4_dissociation_curve}(b). The pool without starters does not allow the system to converge, and so that result is not shown here. The pool with three starters converges more slowly and has a longer plateau compared to using the pools with 6 or more starters. This illustrates the role and importance of starters when choosing the pool. On the other hand, all MCPs approach full convergence around the same number of iterations ($\sim30$ in this case), and so the differences between the MCPs may only matter if one stops the calculation before full convergence is reached. This could be the case, for instance, if achieving chemical accuracy is sufficient. In such cases, using an MCP with more starters is advantageous.
\begin{figure}[h]
	\centering
	\includegraphics[width=0.45\textwidth]{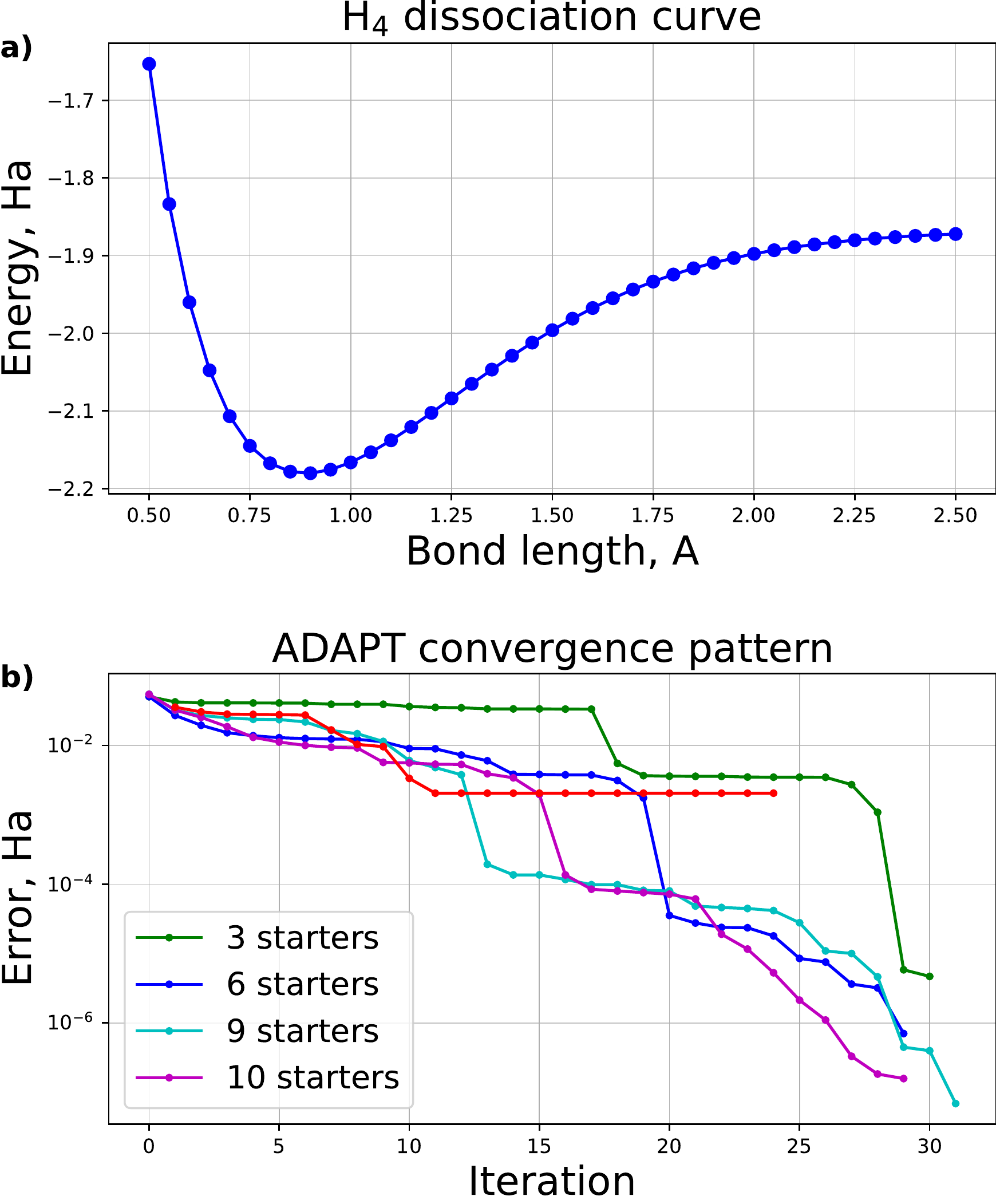}
	\caption{(a) Dissociation curve for the H$_4$ molecule obtained using qubit-ADAPT-VQE on 8 qubits with a symmetry-preserving 11-operator MCP (Eq.~\eqref{Pool:H4}). The dots are the results of individual VQE simulations, while the line is a guide to the eye. The absolute error at each bond length is less than $10^{-8}$ Ha. (b) Absolute error vs. iteration number of the ADAPT-VQE simulation at the minimum of the dissociation curve computed for the different symmetry preserving MCPs listed in Eqs.~\eqref{Pool:H4}, \eqref{Pool:H4_3_starters_pool}, \eqref{Pool:H4_6_starters_pool}, \eqref{Pool:H4_9_starters_pool} (the pools differ by the number of starters included). For comparison, we also include the result from Fig.~\ref{Fig:MCP_no_symmetry_but_starters} (red line). One can clearly see that when a symmetry-adapted pool is used, the error reduces substantially, and an insufficient number of starters makes the convergence slower.}
	\label{Fig:H4_dissociation_curve}
\end{figure}

We now discuss how to account for symmetries more generally. For more complicated molecules, the simple picture above cannot be applied directly, and we need to invoke group theory to construct the correct pool. More precisely, our arguments regarding the number of particles and spin will always be valid for any molecule (condition 1 above), but the spatial symmetry will generally be more complicated than the simple inversion symmetry we have for H$_4$, and group representation theory is required. The following discussion relies on the group theoretical notation used for example in Ref.~\cite{Tinkham_1964}. In order not to overcomplicate the discussion, we will concentrate on the most frequently encountered case, when the ground state is a non-degenerate fully symmetric state. In other words, this state belongs to the $A_1$ irreducible representation of the molecular symmetry group. We assume the Hartree-Fock state to belong to the same $A_1$ irreducible representation. Let us first describe how the starters should look. In the case of the H$_4$ molecule, we said that the starters must conserve the parity of the Hartree-Fock state. Generalizing this statement requires the application of the Wigner-Eckart theorem. Every Pauli string operator can be thought of as a sum of symmetric components, 
\begin{equation}
	\begin{aligned}
		&\hat{P}=\bigoplus_i\hat{S}_i,\\
	\end{aligned}
	\label{Eq:Pauli_string_symmetry_decomposition}
\end{equation}
where each $\hat{S}_i$ transforms according to the $i_{th}$ irreducible representation of the symmetry group ($R_i$). In order for the Pauli string $\hat{P}$ to be a starter, the commutator $\braket{\psi_{HF}|[\hat{P},\hat{H}]|\psi_{HF}}$ must be nonzero, which means that the Hamiltonian must have a nonzero matrix element between the states $\ket{\psi_{HF}}$ and $\hat{P}\ket{\psi_{HF}}$. According to the Wigner-Eckart theorem, this will hold true if $A_1\in R_i\otimes A_1$ for some $i$. Because we assume $A_1$ to be the fully symmetric irreducible representation, it follows that $R_i\otimes A_1=R_i$ and
\begin{equation}
	\begin{aligned}
		&A_1\in R_i\otimes A_1 \iff A_1\in R_i \iff R_i=A_1,
	\end{aligned}
	\label{Eq:condition_for_the_starters}
\end{equation}
so the fully symmetric component must be present in the symmetry decomposition of the Pauli string $\hat P$ (Eq.~\eqref{Eq:Pauli_string_symmetry_decomposition}).

The next question we need to answer is whether we can restrict the complete group and algebra based on symmetry arguments like we did above in the case of parity. The answer is that we need to choose the subgroup and subalgebra to span as few irreducible representations as possible while including the $A_1$ irreducible representation. That is, we do not want to allow the state to leak into symmetry spaces other than $A_1$ if it can be avoided. Let us separately consider the case where each Pauli string transforms as an irreducible representation (and is not a sum of irreducible representations). In that case we just restrict our subgroup and subalgebra to operators that transform as the $A_1$ representation, which means they transform $A_1$ states to other $A_1$ states. That is what we did when we restricted to operators that conserve parity in the case of the H$_4$ molecule. We illustrate these considerations with two other simulations, the 10-qubit LiH problem and the 12-qubit BeH$_2$ problem. 

Figure~\ref{Fig:LiH_orbitals} shows a schematic of the LiH molecule. Here, we analyze symmetries in the context of symmetry-adapted atomic orbitals for the sake of simplicity. Although our HF reference state is always given by a Slater determinant of molecular orbitals, we can equivalently use symmetry-adapted atomic orbitals for the analysis since they possess all the same symmetries and quantum numbers as the molecular orbitals that we build from them.  We freeze two electrons in the 1s orbital of Li and include the 1s orbital of H, as well as the 2s and 2p orbitals of Li in the state space, which the remaining two electrons are allowed to occupy. 
\begin{figure}[h]
	\centering
	\includegraphics[width=0.5\textwidth]{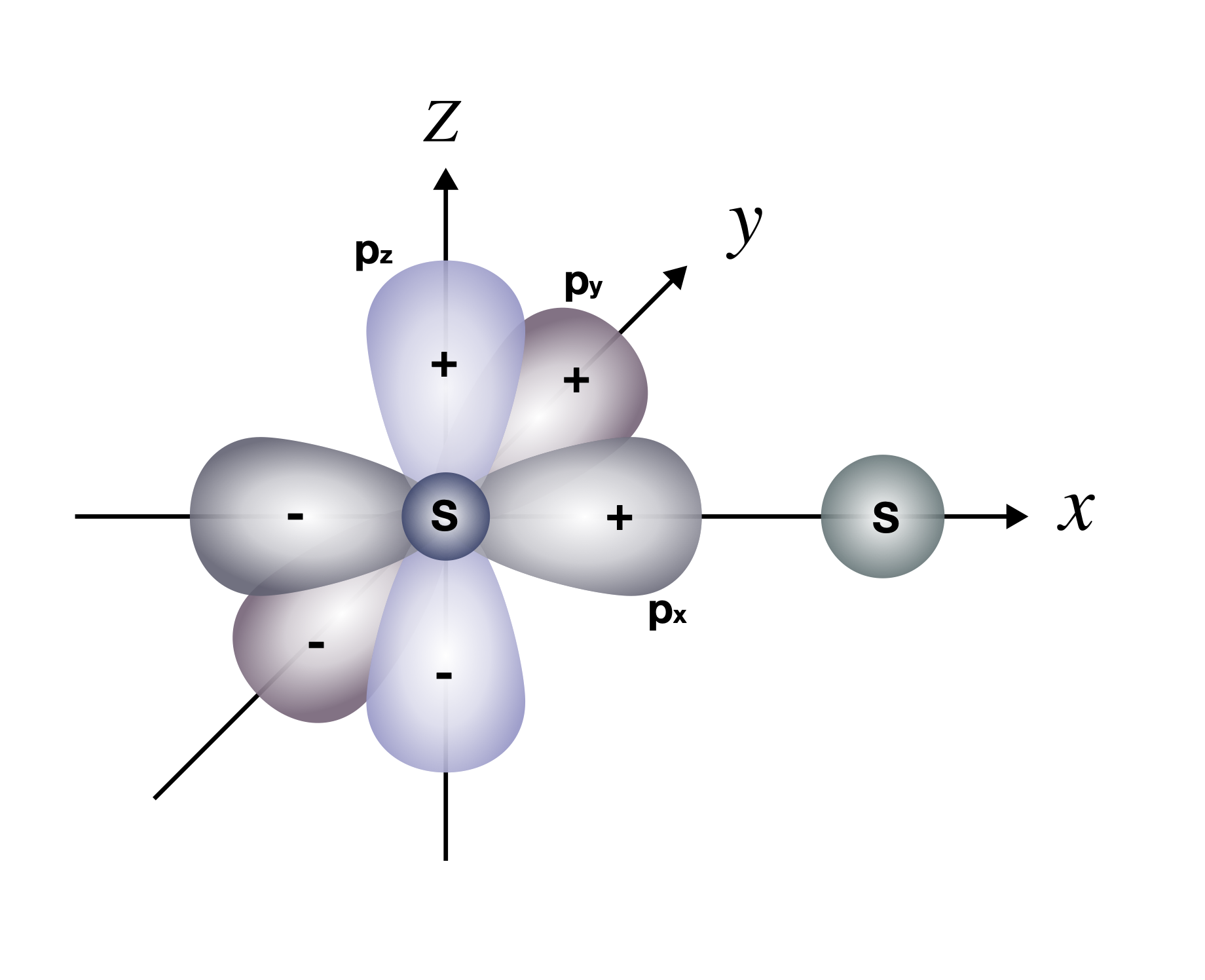}
	\caption{Schematic of the orbitals included in the ADAPT-VQE simulation of the LiH molecule. We freeze two 1$s$ electrons in the Li atom. The remaining two electrons are assumed to occupy the second shell of Li and the first shell of the H atom.}
	\label{Fig:LiH_orbitals}
\end{figure}
We do not need to consider the full symmetry group of LiH. It is enough to include those symmetry operators that allow us to discriminate between different orbitals in terms of symmetry. In our case it suffices to include reflection in the $xy$ and $xz$ planes and a $\pi$-rotation around the $x$-axis. Table \ref{tab:LiH_character_table} shows the character table of the irreducible representations of LiH. Including more operators into the symmetry group would not allow us to split the three $A_1$ orbitals in this case. 
\begin{table}[h!]
	\centering
	\begin{tabular}{ |c|c|c|c|c| }
		\hline
		 & $Id$ & $\sigma_{xy}$ &$\sigma_{xz}$ & $R_x(180)$\\
		\hline
		$A_1$ ($s_H,s_{Li},p_x$) & $1$ &$1$ & $1$ & $ 1$\\
		\hline
		 $B$ ($p_z$)     & $1$ &$-1$ & $1$ & $-1$\\
		\hline
		$C$  ($p_y$)    & $1$ &$1$ & $-1$ & $-1$ \\
		\hline
		$D$   (not present)   & $1$ &$-1$ & $-1$ & $1$\\
		\hline
	\end{tabular}
	\caption{The character table of the LiH reduced symmetry point group (C$_{\infty \text{V}}$). Out of all the symmetries of the system, we only include the rotation by $180^{\text{o}}$ around the x-axis and two reflection planes ($xy$ and $xz$).} 
	\label{tab:LiH_character_table}
\end{table}
In the case of LiH, each Pauli string transforms as an irreducible representation, and not as a direct sum of irreducible representations. To see why this is the case, one has to notice that each classical state of the simulator represents a state of the system that belongs to a representation defined as a direct product of irreducible representations of occupied orbitals. In our case each such irreducible representation is one-dimensional, as shown in Table~\ref{tab:LiH_character_table}, so the product will also be a one-dimensional irreducible representation. Thus, we conclude that each classical state transforms according to a one-dimensional irreducible representation. Now a Pauli string always transforms one classical state to another (not to a superposition state). It follows from the above then that a state belonging to a particular irreducible representation will still belong to a single irreducible representation after a Pauli string acts on it. This in turn can only be the case if each Pauli string transforms as a one-dimensional irreducible representation. As mentioned above, in this case we can simply restrict the operators in the pool to those transforming as the $A_1$ representation.  In other words, these operators will always keep the state inside of the $A_1$ representation, meaning that they will change the occupation of $p_z$  and $p_y$ orbitals by zero or two. Combined with the restrictions on the particle number and spin quantum numbers (condition 1 above), the rules described above allow us to construct a pool for the LiH problem (here starters are distinquished with bold test): 
\begin{equation}
	\begin{aligned}
		&\textbf{XYYZIIZIZY}, \textbf{XYYYIZZZII}, \textbf{YYIZZZIZXY}, \\
		&\textbf{XXZXZIIIYI},\textbf{XYZYIZZIYI}, \textbf{XXXZIIZZZY},\\
		&\textbf{XXIIYXZZII}, \textbf{YXZZIZYYII},\text{XXIYIIXYZY}, \\
		&\text{IIZIZZYYXY}, \text{ZZXZXXIIZY}, \text{YZZZXYZZZY},\\
		&\text{XYXZXXXYZY}, \text{IXIZXXZZYI}.
	\end{aligned}
	\label{Pool:LiH}
\end{equation}
In Appendix \ref{app:LiH} we consider this pool in detail and show how it obeys rules 1-5 above.
Figure~\ref{Fig:LiH_10_qubits}(a) shows the dissociation curve for the LiH molecule, modeled as a 10-qubit system. Running ADAPT-VQE with the pool given in Eq.~\eqref{Pool:LiH} allows us to converge the simulator state to the LiH ground state with an energy error less than $10^{-8}$ Ha relative to the FCI energy. Figure~\ref{Fig:LiH_10_qubits}(b,c) shows the absolute error relative to the FCI energy, as well as the ADAPT gradient, as a function of iteration number for three different bond lengths of $1.1$ $\AA$, $1.5$ $\AA$, and $2.5$ $\AA$. We see that the ADAPT-VQE convergence pattern is similar to that shown in Fig.~\ref{Fig:random_hamiltonians_error_vs_iteration} across a wide range of bond lengths near and far from equilibrium. 
\begin{figure}[h]
	\centering
	\includegraphics[width=0.445\textwidth]{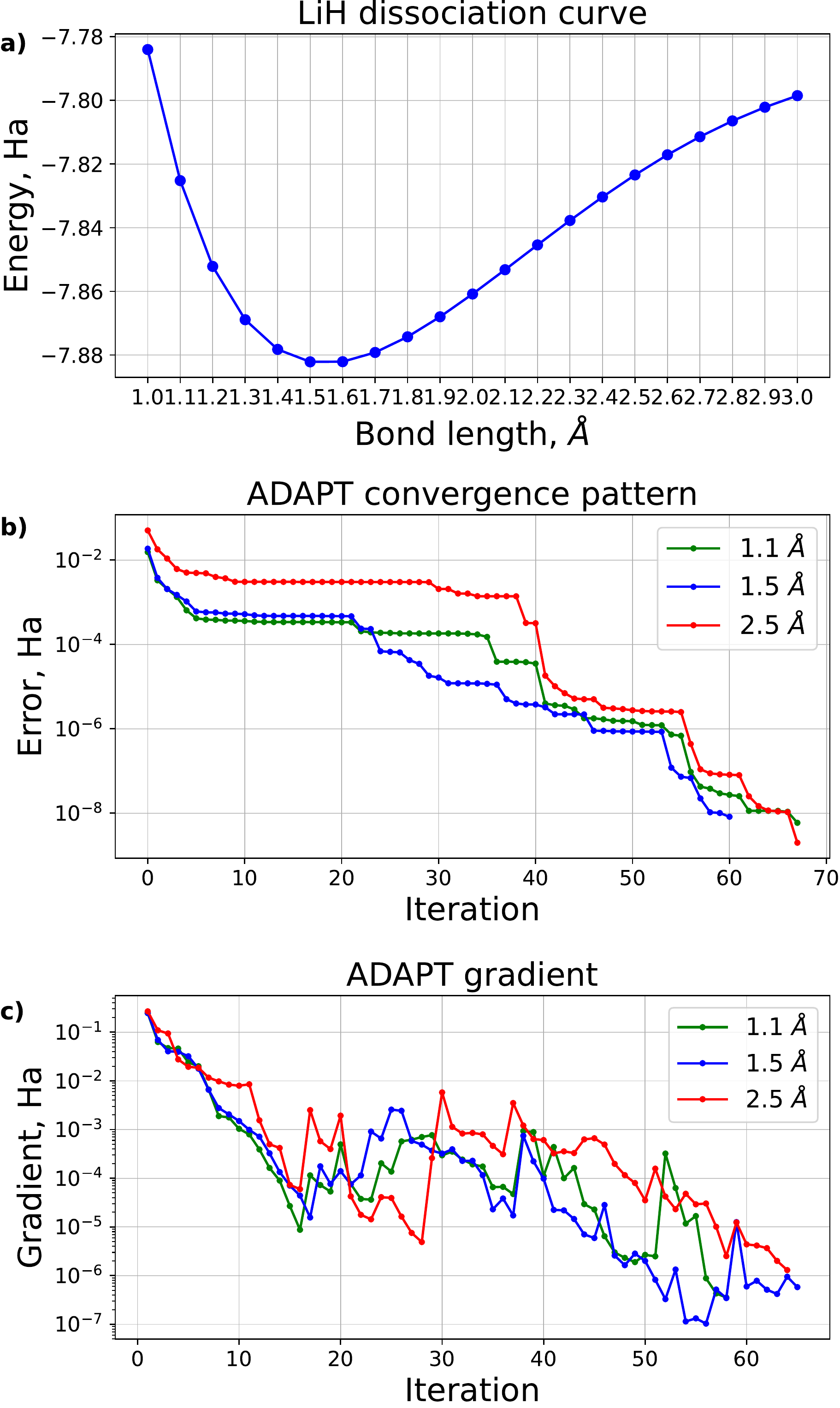}
	\caption{(a) The dissociation curve for the LiH molecule, obtained from a 10-qubit ADAPT-VQE simulation using the 14-operator symmetry-preserving pool from Eq.~\eqref{Pool:LiH}. The dots are the results of individual VQE simulations, while the curve is a guide to the eye. The absolute error in each simulation is less than $10^{-8}$ Ha relative to the FCI energy. (b) The absolute error and (c) the energy gradient vs. iteration number of the ADAPT-VQE simulation at three different bond lengths of $1.1$ $\AA$, $1.5$ $\AA$ and $2.5$ $\AA$.}
	\label{Fig:LiH_10_qubits}
\end{figure}

As our final example, we consider the BeH$_2$ molecule. Figure~\ref{Fig:BeH2_orbitals} shows a schematic of this molecule along with the orbitals relevant for the simulation. We again apply the frozen-core approximation and freeze two electrons in the 1s coordination shell of the Be atom. We restrict which atomic orbitals the electrons are allowed to occupy to two 1s orbitals of the H atoms and to the second coordination shell of Be (2s and 2p orbitals), which leaves us with 6 atomic and 12 spin orbitals in total, resulting in a 12-qubit simulation problem. We introduce the symmetry adapted orbitals:
\begin{equation}
	\begin{aligned}
		&S_+=S_{H1}+S_{H2},\\
		&S_-=S_{H1}-S_{H2}.
	\end{aligned}
	\label{Eq:symmetry_adapted_orbitals}
\end{equation}
\begin{figure}[h]
	\centering
	\includegraphics[width=0.5\textwidth]{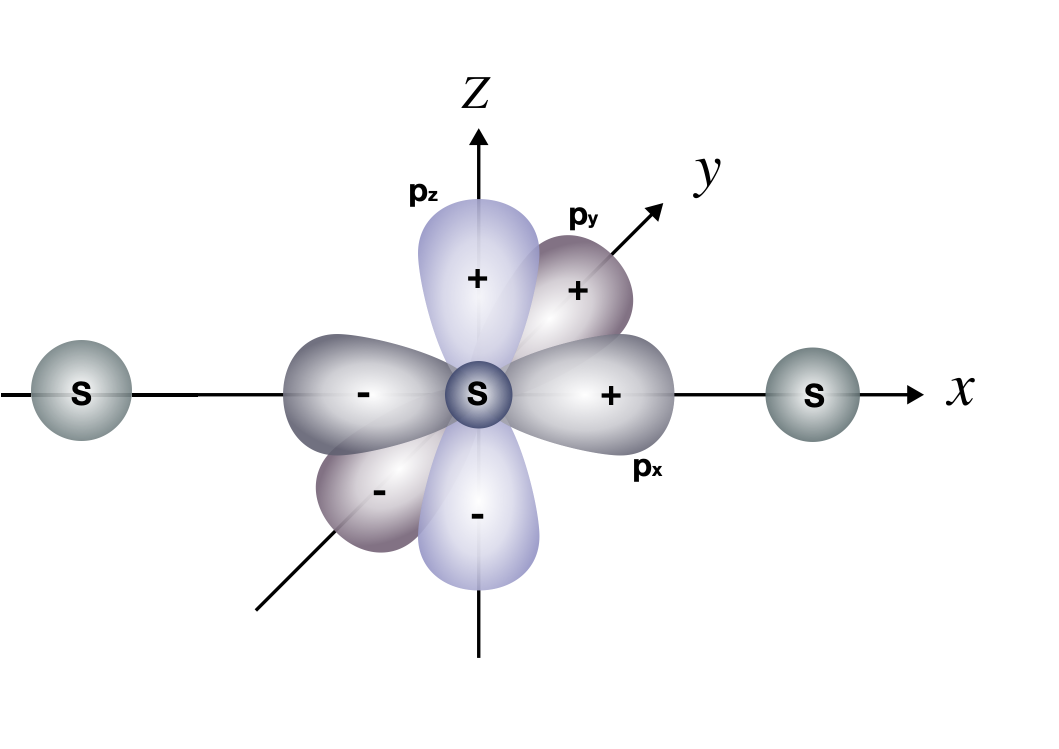}
	\caption{Schematic of the orbitals included in the VQE simulation of the BeH$_2$ molecule. We freeze two electrons in the first coordination shell of the Be atom. The remaining electrons are allowed to occupy the second shell of Be and the first shells of the two H atoms.}
	\label{Fig:BeH2_orbitals}
\end{figure}
Again, we do not need to consider the full symmetry group of the BeH$_2$ molecule. In order to discriminate between different relevant orbitals in terms of symmetry, it suffices to include the inversion, rotations by $180^{\text{o}}$ around the axes $x,y,z$ and reflection with respect to the $xy$, $xz$ and $yz$ planes.  Table~\ref{tab:BeH2_character_table} shows the character table of the corresponding symmetry group. Including more operators into the symmetry group would not allow us to separate by symmetry the two $A_1$ or two $B$ orbitals in this case. 
\begin{table}[h!]
	\centering
	\begin{adjustbox}{width=0.48\textwidth}
	\begin{tabular}{ |c|c|c|c|c| c|c|c|c|}
		\hline
		& $Id$ & I & $\sigma_{xy}$ &$\sigma_{xz}$ &$\sigma_{yz}$& $R_x$& $R_y$& $R_z$\\
		\hline
		$A_1$  ($s_+,s_{Be}$)  & $1$ &$1$ & $1$ & $ 1$ & $1$ &$1$ & $1$ & $ 1$\\
		\hline
		$B$ ($s_-,p_z$)               & $1$ &$-1$ & $-1$ & $ 1$ & $1$ &$-1$ & $-1$ & $ 1$\\
		\hline
		$C$  ($p_x$)                     & $1$ &$-1$ & $1$ & $ 1$ & $-1$ &$1$ & $-1$ & $ -1$\\
		\hline
		$D$ ($p_y$)                      & $1$ &$-1$ & $1$ & $ -1$ & $1$ &$-1$ & $1$ & $ -1$\\
		\hline
	\end{tabular}
	\end{adjustbox}
	\caption{The character table of the BeH$_2$ reduced symmetry point group. Not all irreducible representations are displayed.} 
	\label{tab:BeH2_character_table}
\end{table}
Similar to LiH, in the case of BeH$_2$ each Pauli string transforms as an irreducible representation, and not as a direct sum of irreducible representations. As mentioned above, in this case we can simply restrict the operators in the pool to those transforming like the $A_1$ representation. In other words, these operators will always keep the state within the $A_1$ representation, meaning that they will change the occupation of each of the B, C, and D representations by an even number of electrons. Combined with the restrictions on the number of particles and spin quantum numbers, the rules described above allow us to construct a pool for the BeH$_2$ problem (here starters are distinquished with bold): 
\begin{equation}
	\begin{aligned}
		&\textbf{ZYXIZZZZZYYI}, \textbf{YXIIZZIIYYII}, \textbf{ZIXYZZZIYYII}, \\
		&\textbf{XXIZZZYXIIII},\textbf{XYZIZIYYZIII}, \textbf{IIYXYYZZZZII},\\
		&\textbf{ZZYXIZYYIIII}, \textbf{YZIXZZZIIYYI},\textbf{IXXZIIIZZXYI}, \\
		&\textbf{YZXZZIZZYZYI}, \text{XXXZYXXXYXYI},\\ &\text{ZXIIIZZZZYII},
		\text{XIZZIZXYZXII}, \text{XIIIZZXYYIXY} ,\\ &\text{YZYXZIZIXZXY},
		\text{ZZZIZIIZXXXY},\\ &\text{IZZZYYYXYXXY}.
	\end{aligned}
	\label{Pool:BeH2}
\end{equation}

Figure~\ref{Fig:BeH2_12_qubits}(a) shows the dissociation curve for the BeH$_2$ molecule, modeled as a 12-qubit problem. Running ADAPT-VQE with the pool from Eq.~\eqref{Pool:BeH2} allows us to converge the simulator state to the BeH$_2$ ground state with an error less than $10^{-8}$ Ha relative to the FCI energy. Figure~\ref{Fig:BeH2_12_qubits}(b,c) shows the absolute error relative to the FCI energy, as well as the ADAPT gradient, at three different bond lengths of $1.0$ $\AA$, $1.3$ $\AA$, and $1.8$ $\AA$. These results again show the importance of building symmetries into the operator pool. 
\begin{figure}[h!]
	\centering
	\includegraphics[width=0.45\textwidth]{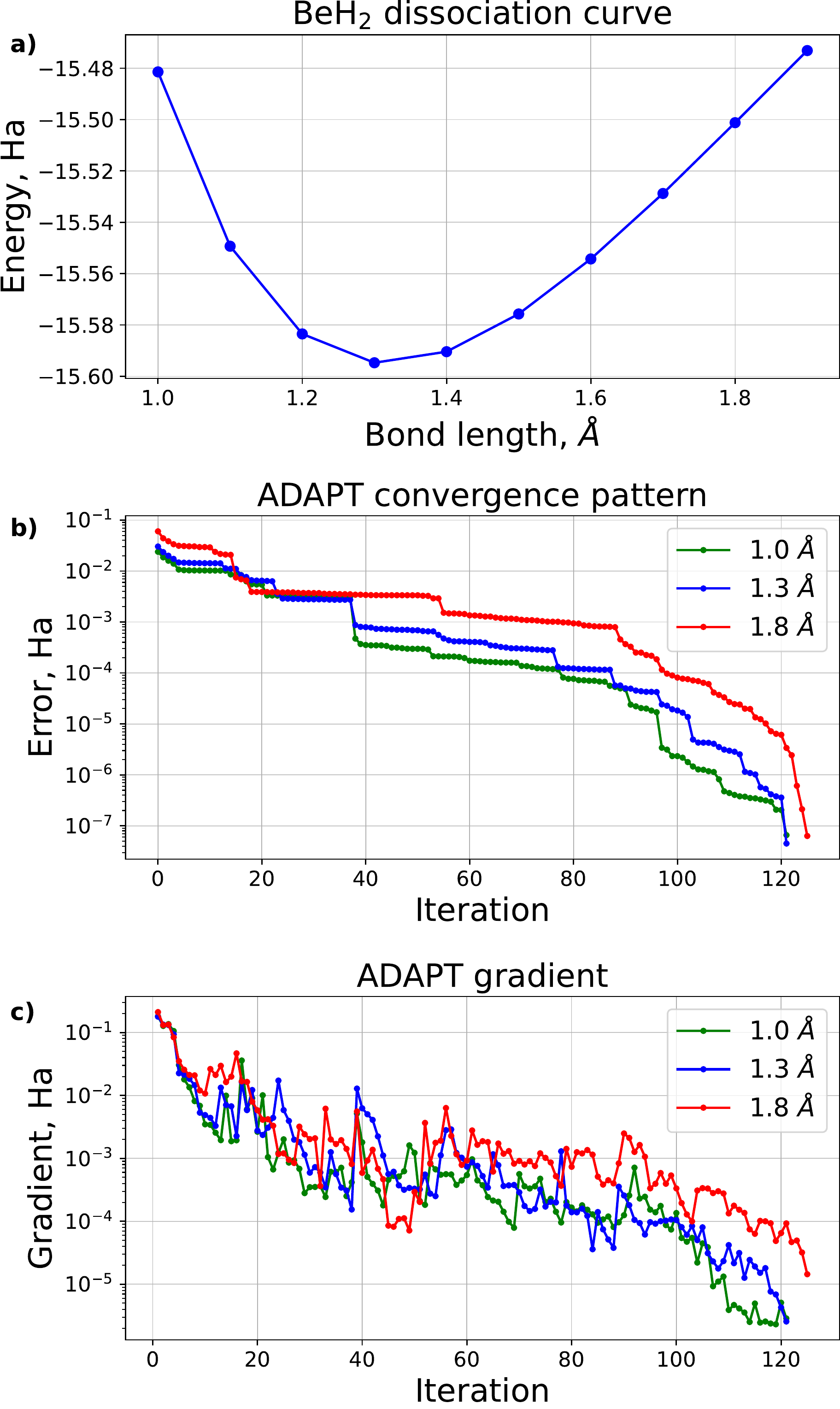}
	\caption{(a) The dissociation curve for the BeH$_2$ molecule, obtained from a 12-qubit ADAPT-VQE simulation using the 17-operator symmetry-preserving pool from Eq.~\eqref{Pool:BeH2}. The dots are the results of individual VQE simulations, while the curve is a guide to the eye. The absolute error in each simulation is less than $10^{-8}$ Ha relative to the FCI energy. (b) The absolute error and (c) the energy gradient vs. iteration number of the ADAPT-VQE simulation at three different bond lengths of $1.0$ $\AA$, $1.3$ $\AA$ and $1.8$ $\AA$.}
	\label{Fig:BeH2_12_qubits}
\end{figure}
\section{Conclusion}
In this work, we showed how to minimize the number of measurements that need to be performed during adaptive variational quantum eigensolver algorithms. We did this by establishing a general set of criteria that determine when a given Pauli operator pool is capable of exactly representing the true ground state of the system being simulated, and we proved that the minimal size of such pools scales linearly in the number $n$ of qubits. This finding means that we can reduce the measurement cost from $\mathcal{O}(n^8)$ to  $\mathcal{O}(n^5)$ in the case of molecular problems. The general criteria we introduced also allow us to systematically construct these minimal complete pools. We further showed that, when the simulated system possesses symmetries, we must take care to incorporate these into the pool to avoid algorithmic convergence issues. This can also be done systematically by leveraging group theoretic techniques. We demonstrated the utility of our approach by finding explicit, symmetry-adapted operator pools for several different molecules and showed that when these are used in the qubit-ADAPT-VQE algorithm, rapid convergence to the exact ground state can be achieved. Combining qubit-ADAPT-VQE with symmetry-adapted minimal complete pools allows one to minimize both circuit depths and measurement counts, bringing the quantum simulation of practical, classically intractable problems on near-term devices closer to fruition.

\section*{Acknowledgments}

This work is supported by the Department of Energy, grant no. DE-SC0019199, and the National Science Foundation, grant no. 1839136.

\setcounter{theorem}{0}
\appendix
\appendixpage
\addappheadtotoc

\section{Minimal complete pools}\label{app:Minimal complete pools}
\subsection*{Preliminaries and Notation}
In this appendix we discuss the generic properties of minimal complete pools (MCPs). We investigate the properties of operator pools, that are defined as sets of Pauli strings $\{\hat{P}_i\}$, each of which is capable of generating a parametrized unitary $\exp(\alpha\hat{P}_i)$. We call an operator pool complete if for any two real states $\ket{\psi}$ and $\ket{\phi}$ the product of these unitaries can transform one to the other:
\begin{eqnarray}\label{Eq:completeness_definition}
\ket{\psi}=\prod_i\exp(\alpha_i\hat{P}_i)\ket{\phi}.
\end{eqnarray}
We call a complete pool minimal if there is no complete pool of smaller size. In this section we show how to identify all  minimal complete pools and discuss their algebraic properties. In order to not work with imaginary matrix entries, we will use $iY$ instead of ordinary Pauli $Y$. Still, for conciseness we will omit writing the factor $i$ in all expressions. The Pauli strings containing odd numbers of $Y$ operators will be referred to as odd Pauli strings ($O$ operators). In contrast, the Pauli strings with even numbers of $Y$ operators will be referred to as even ($E$ operators). The following multiplication rules apply for Pauli operators in this notation
\begin{eqnarray}\label{product rules}
	\begin{aligned}
		&Y\cdot X=-X\cdot Y=Z,\\
		&Z\cdot X=-X\cdot Z=Y,\\
		&Z\cdot Y=-Y\cdot Z=X.
	\end{aligned}
\end{eqnarray}
These rules indicate that any two Pauli strings, $A$ and $B$, either commute or anticommute. Table~\ref{tab:OE_multiplication_rules} lists the parity of their product, depending on whether $A$ and $B$ are odd ($O$) or even ($E$).
\begin{table}[h!]
	\centering
	\begin{tabular}{ |c|c| }
		\hline
		[$A,B$]=0 &  \{$A,B$\}=0\\
		\hline
		$O\cdot O=E$  & $O\cdot O=O$ \\ 
		$E\cdot E=E$  & $E\cdot E=O$ \\ 
		$O\cdot E=O$  & $O\cdot E=E$ \\ 
		\hline
	\end{tabular}
	\caption{Parity of $A\cdot B$, depending on whether $A$ and $B$ are odd ($O$) or even ($E$) and on whether they commute or anticommute.} 
	\label{tab:OE_multiplication_rules}
\end{table}

Now let two odd Pauli strings $O_1$ and $O_2$ anticommute. Then we can perform the following similarity transformation:
\begin{equation}
	\exp(\frac{\pi}{4}O_1)O_2\exp(-\frac{\pi}{4}O_1)=\frac{1}{2}[O_1,O_2].
	\label{Eq: commutator_as_a_similatiry_transformation}
\end{equation}

\begin{theorem}[necessary condition of completeness]\label{theorem:necassary_condition_of_completeness}
Let odd Pauli strings $O_1$, $O_2$, \ldots, $O_k$ form a complete pool in the space of $n$ qubits. Then they generate an algebra, containing at least $2^n-1$ operators, each flipping different sets of qubits (for example for two qubits that could be $Y_1$, $Z_1Y_2$, $X_1Y_2$).
\end{theorem}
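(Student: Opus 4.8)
The plan is to read the conclusion off from the Lie-theoretic content of completeness, exploiting how rigidly an odd Pauli string acts on the computational basis. Write $N=2^{n}$, let $\ket{\mathbf 0}=\ket{0\cdots 0}$, and identify the real $n$-qubit states with the unit sphere $\mathbb S\subset\mathbb R^{N}$. With the $iY$ convention, every odd Pauli string $O$ is a real antisymmetric matrix obeying $O^{2}=-\mathbb 1$, so each $\exp(\alpha O)\in SO(N)$, and the completeness condition \eqref{Eq:completeness_definition} says exactly that the connected subgroup $\mathcal G\subset SO(N)$ generated by the gates $\exp(\alpha O_{i})$ acts transitively on $\mathbb S$. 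Its Lie algebra $\mathfrak g$ is the Lie algebra generated by $O_{1},\dots,O_{k}$, and $\mathfrak g$ is spanned, as a vector space, by a set $\mathcal O$ of odd Pauli strings: the bracket of two odd strings is either zero (if they commute) or twice their product (if they anticommute), and by Table~\ref{tab:OE_multiplication_rules} a product of two anticommuting odd strings is again odd, so every iterated bracket stays proportional to an odd Pauli string. The theorem thus becomes a statement about which strings are forced to appear in $\mathcal O$.

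First I would convert transitivity into an infinitesimal surjectivity statement. The orbit map $\mu\colon\mathcal G\to\mathbb S$, $g\mapsto g\ket{\mathbf 0}$, is smooth and, by transitivity, onto. Sard's theorem supplies a point $g_{0}\in\mathcal G$ at which $d\mu$ is surjective; using the equivariance $\mu\circ L_{h}=\phi_{h}\circ\mu$, with $L_{h}$ left translation on $\mathcal G$ and $\phi_{h}$ the action of $h$ on $\mathbb S$, and differentiating, one finds that $d\mu$ is then surjective at $hg_{0}$ for every $h\in\mathcal G$, hence everywhere on $\mathcal G$ since $\{hg_{0}:h\in\mathcal G\}=\mathcal G$. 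Evaluating at the identity, $d\mu_{e}\colon\mathfrak g\to T_{\ket{\mathbf 0}}\mathbb S$, $X\mapsto X\ket{\mathbf 0}$, is onto, so
\begin{equation}
\{\,X\ket{\mathbf 0}\;:\;X\in\mathfrak g\,\}=\ket{\mathbf 0}^{\perp},
\label{eq:tangent_surjectivity}
\end{equation}
a subspace of $\mathbb R^{N}$ of dimension $2^{n}-1$.

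Next I would evaluate the left-hand side of \eqref{eq:tangent_surjectivity} directly. For an odd Pauli string $O$ with flip set $S=\mathrm{flip}(O)$ (the qubits on which $O$ carries an $X$ or a $Y$), the $I$ and $Z$ factors fix $\ket 0$ up to sign while the $X$ and $Y$ factors flip it, so $O\ket{\mathbf 0}=\pm\ket{e_{S}}$, where $e_{S}\in\{0,1\}^{n}$ is the bit string supported on $S$. Since $\mathfrak g=\mathrm{span}\,\mathcal O$, the left-hand side of \eqref{eq:tangent_surjectivity} equals $\mathrm{span}\{\ket{e_{S}}:S\in\mathcal F\}$ with $\mathcal F=\{\mathrm{flip}(O):O\in\mathcal O\}$. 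The $\ket{e_{S}}$ are orthonormal, and $\ket{\mathbf 0}^{\perp}$ is spanned by the $\ket{e_{S}}$ with $S\neq\varnothing$; comparing these two sub-collections of an orthonormal basis forces $\mathcal F$ to be exactly the set of all nonempty subsets of $\{1,\dots,n\}$ (it cannot contain $\varnothing$, since an odd string has at least one $Y$). Choosing, for each nonempty $S$, one element of $\mathcal O$ with flip set $S$ then produces $2^{n}-1$ elements of $\mathfrak g$ with pairwise distinct flip sets, which is the claim, and the two-qubit example $Y_{1},Z_{1}Y_{2},X_{1}Y_{2}$ is just the $S=\{1\},\{2\},\{1,2\}$ instance.

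The hard part is making the infinitesimal step watertight: $\mathcal G$ need not be a closed subgroup of $SO(N)$, so one should treat it as an immersed Lie subgroup and check that Sard's theorem and the equivariance argument still apply in that setting (equivalently, invoke the homogeneous-space theorem for the transitive smooth action of the Lie group $\mathcal G$ on $\mathbb S$, which makes $\mu$ a submersion and yields \eqref{eq:tangent_surjectivity} at once). Once \eqref{eq:tangent_surjectivity} is in hand, the remaining ingredients—the reduction of $\mathfrak g$ to a span of odd Pauli strings and the identity $O\ket{\mathbf 0}=\pm\ket{e_{\mathrm{flip}(O)}}$—are routine bookkeeping with Table~\ref{tab:OE_multiplication_rules}.
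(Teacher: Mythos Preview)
Your proof is correct and rests on the same core observation as the paper's: completeness forces the Lie algebra generated by the pool, acting on $\ket{00\ldots0}$, to span every orthogonal direction, and since each odd Pauli string sends $\ket{00\ldots0}$ to $\pm\ket{e_{S}}$ with $S$ its flip set, all $2^{n}-1$ nonempty flip sets must appear in the algebra.

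Where you differ from the paper is in how you extract the infinitesimal surjectivity. The paper's argument is elementary and direct: it fixes a target $\sqrt{1-p}\,\ket{00\ldots0}+\sqrt{p}\,\ket{e_{S}}$, writes the realizing group element as $\exp(\hat g)$ with $\hat g$ in the algebra, lets $p\to 0$ so that $\hat g\to 0$ and $\exp(\hat g)\approx 1+\hat g$, and reads off the coefficient of $\ket{e_{S}}$ to conclude that some string in the algebra has flip set $S$. Your route instead invokes the smooth-action machinery (Sard plus equivariance, or equivalently the homogeneous-space submersion theorem) to obtain the tangent-space surjectivity at $\ket{00\ldots0}$ in one stroke. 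Your version is more careful on the analytic side---the paper's limit step tacitly assumes one can choose $\hat g$ depending continuously on $p$ and vanishing at $p=0$, which is precisely the local-section content your submersion argument supplies---while the paper's version avoids any mention of immersed subgroups or regular values and is more accessible to a physics audience. Both arguments then finish with the identical combinatorial step.
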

\begin{proof}
Assume we want to transform $\ket{00...0}$ to $\sqrt{1-p}\ket{00...0}+\sqrt{p}\ket{1_{i_1}1_{i_2}....1_{i_r}00....0}$. According to the completeness definition, we have to be able to do this with a finite product $\prod_i\exp(\alpha_iO_i)$, which according to the theory of Lie groups can be expressed as $\exp(\hat{g})$, where the operator $\hat g$ belongs to the Lie algebra generated by the pool $O_1$, $O_2$, \ldots, $O_k$. This Lie algebra contains odd Pauli strings only, as the commutator of two Pauli strings coincides with their product whenever it is nonzero, and the product is again odd according to Table~\ref{tab:OE_multiplication_rules}. In the limit $p\rightarrow 0$, it holds that $\hat{g}\rightarrow 0$ and thus for small $p$ we can expand $\exp(\hat{g})=1+\hat{g}$. The operator $\hat{g}$ must be of the form $\sqrt{p}\times \text{operator}$, flipping $0_{i_j}\rightarrow 1_{i_j}$ ($j\in \{1, \dots, r\}$), and so the Lie algebra must contain the Pauli strings that flip the corresponding qubits. Arbitrariness of the set $\{i_1,\ldots,i_r\}$ means the number of the strings in the complete algebra is at least $2^n-1$.
\end{proof}
\begin{remark}
Most probably this condition is also sufficient; to prove it one would have to show that the Lie algebra, which contains Pauli strings that flip any set of qubits, can transform $\ket{00....0}$ to any other state. Then one would have to show that any $\exp(\hat{g})$, where $\hat g\in$ Lie algebra, can be expressed as a $\emph{finite}$ product of $\exp(\alpha_iO_i)$. We will not prove sufficiency here, as it will be an automatic consequence of the theory later.
\end{remark}
\subsection*{Product group of Pauli strings and their Lie algebra }
Let $O_1$, $O_2$, \ldots, $O_k$ be odd Pauli strings that generate a Lie algebra through all possible commutators $[O_i,O_j]$, $[[O_i,O_j],O_m]$, $[[O_i,O_j],[O_m,O_n]]$, .... Let us also consider a product group of arbitrary products $O_iO_j...O_m$, generated by $O_1$, $O_2$, \ldots, $O_k$. Regarding this group, one important remark must be made. If $[O_1,O_2]\neq 0$, then $\{O_1,O_2\}=0$ and thus $O_1O_2=-O_2O_1$. We do not need to consider negative Pauli strings, so we will work with a factor group $G$ generated over a normal subgroup $\{I\times I\times...\times I,-I\times I\times...\times I\}$ of the total group. In this new group $G$, the elements $A$ and $-A$ are considered the same, making this group Abelian. Note that we will still refer to commutation or anticommutation of Pauli strings from this group, but in our case the products $Z\cdot Y$ and $Y\cdot Z$ correspond to the same element of $G$. Whenever we consider commutators, we will arrange it such that the result is a positive Pauli string.

 If we notice that the nonzero commutator of Pauli strings coincides with their product up to a numerical factor (irrelevant for our purposes), we arrive at the following Lemma:
\begin{lemma}\label{lemma:algebra_is_a_subet_of_group}
The Lie algebra generated by $O_1$, $O_2$, \dots, $O_k$ is a subset of odd Pauli strings from the group $G$.
\end{lemma}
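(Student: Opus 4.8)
The plan is to prove Lemma~\ref{lemma:algebra_is_a_subet_of_group} by induction on the depth of nested commutators used to build an element of the Lie algebra, showing at each stage that every such element is (a numerical multiple of) an element of the product group $G$, and moreover is an odd Pauli string. The base case consists of the generators themselves: each $O_i$ is by hypothesis an odd Pauli string and trivially lies in $G$ (it is one of its generators). The inductive step is where the structural facts assembled earlier in the appendix do all the work.

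\medskip
\noindent For the inductive step, suppose $A$ and $B$ are two elements appearing in the commutator tree, and assume inductively that each of them equals, up to a nonzero scalar, an odd Pauli string lying in $G$. I would first invoke the observation (from the multiplication rules \eqref{product rules}) that any two Pauli strings either commute or anticommute. If $[A,B]=0$ then the commutator contributes nothing new, so the only case to treat is $\{A,B\}=0$, in which case $[A,B]=2AB$. Since $A,B$ are odd, Table~\ref{tab:OE_multiplication_rules} (the anticommuting column, row $O\cdot O=O$) tells us that $AB$ is again an odd Pauli string. And because $A$ and $B$ are both (up to scalars) elements of $G$ and $G$ is closed under products, $AB\in G$ as well --- here one uses that $G$ was explicitly constructed as the factor group in which $\pm$ signs are identified, so the antisymmetry $AB=-BA$ causes no ambiguity and the product is a well-defined group element. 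Hence $[A,B]$ is a scalar multiple of an odd Pauli string in $G$, which is exactly what the induction requires. Iterating, every nested commutator $[O_i,O_j]$, $[[O_i,O_j],O_m]$, and so on is a scalar multiple of an odd string in $G$; since the Lie algebra is the real span of all such commutators together with the generators, and since any real linear combination of odd Pauli strings that happens to be proportional to a single Pauli string is itself an odd string, every algebra element that is a Pauli string lies among the odd strings of $G$. (More precisely: the algebra as a vector space sits inside the span of the odd strings of $G$, which is the statement we want.)

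\medskip
\noindent I do not anticipate a serious obstacle here --- the lemma is essentially a bookkeeping consequence of two facts already established: that a nonzero commutator of Pauli strings is proportional to their product, and that the product of two odd strings that anticommute is odd (Table~\ref{tab:OE_multiplication_rules}). The one point requiring a little care, and the closest thing to a subtlety, is the passage to the factor group $G$: one must be sure that identifying $A$ with $-A$ is legitimate for the purpose of this lemma, i.e. that nothing in the Lie-algebraic structure is lost by ignoring the overall sign. This is fine because the statement of the lemma is about which Pauli strings (as projective objects, sign-free) appear, not about their coefficients; the scalars and signs generated along the way are precisely the data that $G$ is designed to quotient out. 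I would state this explicitly so the reader sees that the ``up to a numerical factor'' clause in the lemma's hypothesis is doing its job. The remaining work is to note that the span of odd strings of $G$ is finite-dimensional, so the induction terminates and the generated Lie algebra is a genuine (finite-dimensional) subspace of that span, completing the proof.
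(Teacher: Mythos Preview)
Your proposal is correct and follows essentially the same reasoning as the paper: the paper does not give a formal proof but simply observes, just before stating the lemma, that a nonzero commutator of Pauli strings coincides with their product up to a numerical factor, and (in the proof of Theorem~\ref{theorem:necassary_condition_of_completeness}) that such a product of anticommuting odd strings is again odd by Table~\ref{tab:OE_multiplication_rules}. Your inductive write-up just makes this one-line observation explicit and careful, including the handling of signs via the factor group $G$, so there is no substantive difference in approach.
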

\begin{remark}
	The Lie algebra must not span all odd strings from the group $G$. For example, if the generators are $Y_1$, $Y_2$, $Y_3$, then the group $G$ contains the odd string $Y_1Y_2Y_3$, but this operator cannot be obtained through commutators, and thus this element is not in the Lie algebra.
\end{remark}
  
  \begin{remark}
The important consequence of Lemma \ref{lemma:algebra_is_a_subet_of_group} is that if group $G$ is too small to contain the operators necessary for completeness, then the algebra is also too small for that.
  \end{remark}
\begin{lemma}
	Let $A_1$ and $A_2$ be two Pauli strings that flip the same qubits: $A_1\ket{00....0}=\pm A_2\ket{00....0}$. Then $A_2=P\cdot A_1$, where $P$ is an even Pauli string containing only $Z$ and $I$ operators, and iff $A_1$ and $A_2$ have the same parity, $[A_1,P]=0$.
	\label{lemma:two_paulis_flipping_same_qubits}
\end{lemma}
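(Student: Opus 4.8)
The plan is to prove this by a direct, entry-by-entry analysis of the two Pauli strings, using the elementary fact that a Pauli string flips qubit $j$ (sends $\ket{0}$ to a multiple of $\ket{1}$ in the $j$-th tensor factor) exactly when its $j$-th entry is $X$ or $Y$, and leaves that factor alone (up to a phase) exactly when the entry is $I$ or $Z$. Consequently, the hypothesis $A_1\ket{0\ldots0}=\pm A_2\ket{0\ldots0}$ is equivalent to saying that $A_1$ and $A_2$ have their $X$/$Y$ entries on exactly the same set $S$ of qubits, and hence their $I$/$Z$ entries on the complementary set $S^{c}$.

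First I would establish the factorization. Set $P:=A_2A_1$ in the group $G$, so that equalities hold up to an overall sign as in the paragraph defining $G$, and $A_2=P A_1$. To see that $P$ has only $I$ and $Z$ entries, examine each qubit separately: on $S^{c}$ both entries of $A_1,A_2$ lie in $\{I,Z\}$, whose products stay in $\{I,Z\}$; on $S$ both entries lie in $\{X,Y\}$, and the rules \eqref{product rules} give $XX=YY=I$ and $XY=YX=\pm Z$. Hence every entry of $P$ lies in $\{I,Z\}$; in particular $P$ has zero $Y$ operators and so is even. This proves the first assertion.

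Next I would handle the commutator using the standard rule that for two Pauli strings $A_1 P=(-1)^{m}P A_1$, where $m$ is the number of qubits on which the single-qubit factors of $A_1$ and $P$ both differ from $I$ and differ from each other. Since $P$ carries $Z$'s only on some set $T$, and on $S^{c}\cap T$ the matching factor of $A_1$ is $I$ or $Z$ (both commute with $Z$), the only contributions to $m$ come from qubits in $S\cap T$; and a qubit of $S$ lies in $T$ precisely when $A_1$ and $A_2$ disagree there (one $X$, one $Y$), on which $A_1$'s factor anticommutes with $Z$. Hence $m=a$, where $a:=\#\{j\in S:\ (A_1)_j\neq (A_2)_j\}$. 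Finally I would relate $a$ to parity: on $S$ the pair $((A_1)_j,(A_2)_j)$ is one of $(X,X),(Y,Y),(X,Y),(Y,X)$ with multiplicities $n_{XX},n_{YY},n_{XY},n_{YX}$, and no $Y$ occurs outside $S$, so $A_1$ has $n_{YY}+n_{YX}$ many $Y$'s and $A_2$ has $n_{YY}+n_{XY}$ many, whence the parities of $A_1$ and $A_2$ differ by $n_{XY}+n_{YX}=a\pmod 2$. Combining, $[A_1,P]=0\iff m$ even $\iff a$ even $\iff A_1$ and $A_2$ have the same parity, as claimed.

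I do not expect a genuine obstacle here: the argument is a finite, local bookkeeping. The only points requiring care are keeping the sign conventions straight (the $iY$ convention and the passage to the factor group $G$, which is precisely why the identity $A_2=PA_1$ should be read up to sign) and making sure the count $m$ of anticommuting positions is matched cleanly to $a$ and then to the parity difference; once the four local cases on $S$ are tabulated this is immediate.
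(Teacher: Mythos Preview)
Your proposal is correct and follows essentially the same entry-by-entry approach as the paper: both arguments observe that on the ``flipping'' positions $S$ the factors of $A_1,A_2$ lie in $\{X,Y\}$ and on $S^c$ in $\{I,Z\}$, conclude that $P=A_2A_1$ has only $I,Z$ entries, and then count the number of $X\!\leftrightarrow\!Y$ swaps on $S$ to relate the commutator to the parity. The paper's version is terser (it argues by example and invokes Table~\ref{tab:OE_multiplication_rules} for the converse), whereas you make the ``iff'' direction explicit via the $n_{XX},n_{YY},n_{XY},n_{YX}$ bookkeeping; but the underlying idea is identical.
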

\begin{proof}
	Let us choose an arbitrary odd Pauli string, e.g., 
	\begin{equation}
		A_1=YXIZYXY.
	\end{equation}
	Whenever one encounters $Y$ or $X$ in $A_1$, for the operator $A_2$ they either remain invariant or transform according to $Y\rightarrow X$ and $X\rightarrow Y$. This can only be achieved by multiplying the corresponding Pauli operator by $I$ or $Z$. Whenever $I$ or $Z$ is encountered in $A_1$, they do not flip qubits and thus they must be multiplied by $I$ or $Z$ to remain like that. So $P$ consists of only $I$ and $Z$ operators. Whenever $A_1$ and $A_2$ are both odd or both even, $X$ and $Y$ are flipped an even number of times to transform $A_1$ into $A_2$. Thus it must be that $[A_1,P]=0$. (This can also be deduced from the rules in Table \ref{tab:OE_multiplication_rules}.)
\end{proof}
\subsection*{Internal structure of the group $G$}
Consider a product group $G$ generated by $O_1$, $O_2$, \ldots, $O_p$. Let us also consider all Pauli strings $H=\{P_1,\ldots,P_l\}$ from $G$, such that they consist of operators $I$ and $Z$ only. This set forms a subgroup of $G$. Because this subgroup is Abelian and each element is its own inverse, we conclude that there are only $k$ independent Pauli strings $P_1$, \ldots, $P_k$ in $H$, and all others are just products of these (of which there are $l=2^k$).
\begin{lemma}
	We say a unitary operator $\hat{U}$ performs a similarity transformation on the operator $\hat{O}$, if $\hat{O}\rightarrow\hat{U}^\dagger\hat{O}\hat{U}$. Then the $k$ independent generators $P_1$, \ldots, $P_k$ can be chosen in such a way that they can be transformed into $Z_1$, $Z_2$, \ldots, $Z_k$ using a certain similarity transformation.
\end{lemma}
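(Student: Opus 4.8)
The plan is to translate the statement into linear algebra over the binary field $\mathbf{F}_2$ and then realize the required change of basis by an explicit Clifford circuit. First I would set up the standard dictionary between $I/Z$ Pauli strings and bit vectors: a Pauli string on $n$ qubits built only from $I$ and $Z$ factors is encoded by its support vector $v\in\mathbf{F}_2^n$, with $v_i=1$ precisely when the $i$-th factor is $Z$. Under this encoding, multiplication of two such strings corresponds to coordinatewise addition modulo $2$ of their support vectors. Hence the subgroup $H$ of $G$ consisting of all $I/Z$ strings corresponds to a linear subspace $V\subseteq\mathbf{F}_2^n$, and a set of $k$ independent generators $P_1,\dots,P_k$ of $H$ is exactly a basis $v_1,\dots,v_k$ of $V$ (with $\dim V=k$, as established in the paragraph preceding the lemma, since $H$ is an elementary abelian $2$-group of order $2^k$).

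Next I would extend $v_1,\dots,v_k$ to a basis $v_1,\dots,v_n$ of $\mathbf{F}_2^n$ and let $M\in\mathrm{GL}_n(\mathbf{F}_2)$ be the linear map determined by $Mv_i=e_i$ for all $i$, where $e_1,\dots,e_n$ is the standard basis; then $M$ sends the chosen generating vectors to the coordinate vectors. The point is that $M$ can be written as a product of elementary transvections, i.e.\ maps of the form $v\mapsto v+v_be_a$ (``add coordinate $b$ to coordinate $a$''), because over $\mathbf{F}_2$ there is no scaling to contend with and a coordinate swap is itself a product of three such transvections, so transvections generate all of $\mathrm{GL}_n(\mathbf{F}_2)$.

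The key step is to realize each such transvection as conjugation by a $\mathrm{CNOT}$ gate. A short direct computation shows that conjugating $Z_t$ by $\mathrm{CNOT}_{c\to t}$ (control $c$, target $t$) produces $Z_cZ_t$, while $Z_c$ and every $Z_i$ with $i\neq t$ are left invariant; on support vectors this is precisely the transvection that adds coordinate $t$ into coordinate $c$. Composing $\mathrm{CNOT}$s according to a transvection factorization $M=E_m\cdots E_1$ therefore yields a Clifford unitary $\hat U$ with $\hat U^\dagger P_i\hat U=Z_i$ for $i=1,\dots,k$. Since $\mathrm{CNOT}$ conjugation sends $I/Z$ strings to $I/Z$ strings and introduces no overall sign, this is consistent both in $G$ and with the sign conventions fixed earlier. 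I would finish with the remark that, to keep everything within the repertoire of similarity transformations used elsewhere in this appendix, each $\mathrm{CNOT}$ can in turn be expressed, up to an irrelevant global phase, as a short product of conjugations by exponentials $\exp(\tfrac{\pi}{4}\hat O)$ of Pauli strings.

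The only genuine obstacle I anticipate is bookkeeping of orientations — control versus target, whether conjugation by the assembled circuit acts on support vectors by $M$ or by $M^{-1}$, and left versus right multiplication in the transvection factorization — so that the final circuit sends each chosen $P_i$ to $Z_i$ rather than to some other string of the standardized basis. The mathematical substance, namely that $\mathrm{GL}_n(\mathbf{F}_2)$ is generated by transvections and that transvections are implemented by $\mathrm{CNOT}$s, is entirely standard.
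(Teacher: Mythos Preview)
Your argument is correct, and it takes a genuinely different route from the paper's. The paper proceeds in two stages: first it exploits the freedom in the lemma to \emph{change} the generating set, performing Gaussian elimination by replacing $P_i\to P_iP_j$ and inserting qubit SWAPs so that the first $k$ columns of the support matrix become the identity and each generator has the shape $Z_iP_{kn}^i$; then, for each $i$, it strips off the tail $P_{kn}^i$ with two successive $\exp(\tfrac{\pi}{4}\,\cdot)$ rotations by odd Pauli strings, passing through $X_i$ as an intermediate before landing on $Z_i$. Your approach instead keeps the given generators fixed, recasts the problem as realizing an arbitrary element of $\mathrm{GL}_n(\mathbf{F}_2)$ on $Z$-support vectors, and implements that element by a $\mathrm{CNOT}$ circuit via the transvection decomposition. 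What your route buys is conceptual cleanliness and a slightly stronger conclusion (no need to re-choose generators, and the whole argument stays inside the $I/Z$ sector with no intermediate $X$'s or signs); what the paper's route buys is that every step is already phrased in the exact vocabulary of SWAPs and $\exp(\tfrac{\pi}{4}O)$ rotations used throughout the rest of the appendix, so no extra remark about decomposing $\mathrm{CNOT}$s is needed. Your closing caveat about control/target and $M$ versus $M^{-1}$ bookkeeping is the right thing to flag, but it is routine and does not threaten correctness.
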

\begin{proof}
	Without loss of generality, we can assume $P_1$ contains the operator $Z_1$. If it does not, then we can perform a similarity transformation with a SWAP operation (which is implemented by a real and orthogonal matrix), such that $P_1$ contains $Z_1$. Now if $P_i$ ($i\neq 1$) also contains $Z_1$, we replace $P_i\rightarrow P_iP_1$ with the other matrix from the subgroup $H$, so $P_i$ will contain $I$ instead of $Z_1$. We can now assume $P_2$ contains $I_1Z_2$; if it does not, we perform a SWAP such that $P_2$ then contains $Z_2$. Whenever $P_i$ ($i\neq 2$) contains $Z_2$, we replace $P_i\rightarrow P_2P_i$. Repeating this procedure, we will construct the generators
	\begin{eqnarray}
		\begin{aligned}
			&Z_1I_2I_3\dots I_kP_{kn}^1,\\
			&I_1Z_2I_3\dots I_kP_{kn}^2,\\
			&\quad\qquad\vdots\\
			&I_1I_2I_3\dots Z_kP_{kn}^k,\\
		\end{aligned}
	\end{eqnarray}
	where $P_{kn}^i$ is a string on the qubits $k+1$ to $n$ that contains $Z$ or $I$ operators. Let us perform a similarity transformation generated by $I_1\times I_2\times...\times Y_i\times...\times I_k\times P_{kn}^i$, according to Eq.~\eqref{Eq: commutator_as_a_similatiry_transformation}. It will only transform $P_i$ (the generator commutes with all others), and the result will be
	\begin{equation}
		I_1\times I_2\times...Z_i...\times I_k\times P_{kn}^i\rightarrow I_1\times I_2\times...X_i...\times I_n.
	\end{equation}
	Now performing a similarity transformation generated by $I_1\times I_2\times...Y_i...\times I_n$, we obtain 
	\begin{equation}
		I_1\times I_2\times...X_i...\times I_n\rightarrow I_1\times I_2\times...Z_i...\times I_n.
	\end{equation}
	All other $P_j$ strings are left intact by this transformation. Performing this procedure for every $P_i$, we bring the generators $P_1, \ldots, P_k$ to $Z_1$, $Z_2$, \ldots, $Z_k$.
\end{proof}

Without loss of generality, we can now assume the subgroup $H$ of $G$ to be generated by $Z_1$, $Z_2$, \ldots, $Z_k$. The subgroup $H$ is normal due to the Abelian property of $G$, and so we can consider the factor group $\tilde{G}=G/H$. Clearly, each element of $\tilde{G}$ is a set of Pauli strings (odd and even), that flip the exact same qubits (act upon $\ket{00...0}$ in exactly the same way). No two elements of the factor group $\tilde{G}$ act in the same way on $\ket{00....0}$, as otherwise we would have more than $k$ independent $Z$-Pauli strings (see Lemma~\ref{lemma:two_paulis_flipping_same_qubits}). The factor group $\tilde{G}$ thus contains not more than $2^n$ elements. But according to the necessary condition of completeness (Theorem \ref{theorem:necassary_condition_of_completeness}), we need operators that flip all possible combinations of qubits, so the number of elements in the factor group for a complete pool is $2^n$. Because again each element of $\tilde{G}$ is its own inverse and the group is Abelian, $\tilde{G}$ can be generated with $n$ additional Pauli strings (multiplied by $H$), that without loss of generality can be chosen to be $Y_1P_{kn}^1$, $Y_2P_{kn}^2$,...,$Y_kP_{kn}^n$, $Y_{k+1}Q_{kn}^{k+1}$,...,$Y_{n}Q_{kn}^{n}$. Here $P_{kn}^i$ is defined as above, while $Q_{kn}^i$ is a $Z$-Pauli string on qubits $k+1$ to $n$, but not on the $i$th one (this one is reserved for the $Y$ operator in our case). When multiplied by $H$, these generators will recover all the elements of $G$. Thus we have shown that the group $G$, generated by a complete pool, contains $n+k$ generators. But $k$ cannot be arbitrary, because products of generators containing $Y$ operators are sometimes even. If such an even operator commutes with all $Z_1$, $Z_2$,...., $Z_k$, then according to the rules in Table~\ref{tab:OE_multiplication_rules}, it will create an element of $\tilde{G}$ consisting of even Pauli strings only, and according to Theorem 1 the initial generators do not form a complete pool. This never happens for operators containing $Y_1$, $Y_2$, ..., $Y_k$, so we only have to make sure $Y_{k+1}Q_{kn}^{k+1}$, $Y_{k+2}Q_{kn}^{k+2}$, ..., $Y_{n}Q_{kn}^{n}$ do not generate even Pauli strings. First of all, this means all of them mutually anticommute. Otherwise the product of two commuting odd Pauli strings would generate an even Pauli string, commuting with $Z_1$, $Z_2$, ..., $Z_k$ (see the rules in Table~\ref{tab:OE_multiplication_rules}). Second, there are not more than two operators ($k\geq n-2$). Indeed, for three mutually anticommuting strings $a,b,c$, we would necessarily have $[[a,b],c]=0$, and then $abc$ would be an even string that commutes with $Z_1$, $Z_2$,...., $Z_k$. We  thus conclude that $k\geq n-2$, and so the minimal number of generators capable of generating a complete pool is $n+n-2=2n-2$. We have thus proved the following lemma:
\begin{lemma}
The most general form of the generators of the minimal group $G$ up to a similarity transformation is $Z_1$, $Z_2$, ..., $Z_{n-2}$, $Y_1P_{n-2,n}^1$, $Y_2P_{n-2,n}^2$, ..., $Y_{n-2}P_{n-2,n}^n$, $Y_{n-1}$, $Z_{n-1}Y_n$. The last two qubits can always be switched to $Y_{n-1}Z_n$, $Y_n$ using a swap operation (these two options are the only possible ones).
\end{lemma}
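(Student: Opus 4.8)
The plan is to pin down the structure of the product group $G$ generated by a complete pool and then single out which such groups have the fewest generators. I would begin from the decomposition already in place: the $I/Z$-only Pauli strings in $G$ form an Abelian subgroup $H$, and by the preceding lemma its independent generators can, after a similarity transformation, be taken to be $Z_1,\dots,Z_k$. In the factor group $\tilde G = G/H$, each coset is exactly the set of strings in $G$ that flip the same qubits, since two such strings differ by an $I/Z$ string in $G$, i.e.\ by an element of $H$, by Lemma~\ref{lemma:two_paulis_flipping_same_qubits}. The necessary condition of completeness, Theorem~\ref{theorem:necassary_condition_of_completeness}, forces $G$ to contain strings flipping every one of the $2^n$ subsets of qubits, so $\tilde G$ has exactly $2^n$ elements and is generated by $n$ cosets; together with the $k$ generators of $H$ this gives $n+k$ generators for $G$, and after relabeling qubits and absorbing $I/Z$ factors into $H$ I can put the coset generators in the normalized form $Y_i P^i$ for $i\le k$ and $Y_i Q^i$ for $i>k$, with $P^i, Q^i$ strings of $I$'s and $Z$'s on qubits $k+1,\dots,n$.

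Next I would pin down $k$. The governing principle is that $G$ cannot contain an even Pauli string that commutes with all of $Z_1,\dots,Z_k$: if it did, the entire coset of that string in $\tilde G$ would consist of even strings (by the parity rules of Table~\ref{tab:OE_multiplication_rules}), yet Theorem~\ref{theorem:necassary_condition_of_completeness} guarantees an \emph{odd} algebra element flipping exactly the same qubits, which lies in that same coset---a contradiction. Applying this to the generators $Y_i Q^i$ with $i>k$, all of which commute with $Z_1,\dots,Z_k$ because they act trivially on the first $k$ qubits: they must mutually anticommute, since the product of two commuting odd strings is even; and there can be at most two of them, since for three mutually anticommuting Pauli strings $a,b,c$ one checks that $c$ commutes with $ab$, whence $[[a,b],c]=0$ and $abc$ is an even string still commuting with the $Z_i$'s. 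Therefore $k\ge n-2$, the generator count satisfies $n+k\ge 2n-2$, and minimality forces $k=n-2$.

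With $k=n-2$ the generators are $Z_1,\dots,Z_{n-2}$, then $Y_i P^i_{n-2,n}$ for $i=1,\dots,n-2$ with each $P^i_{n-2,n}$ supported on qubits $n-1$ and $n$, and finally two generators $Y_{n-1}Q^{n-1}$ and $Y_n Q^n$ living entirely on qubits $n-1$ and $n$, so $Q^{n-1}\in\{I_n,Z_n\}$ and $Q^n\in\{I_{n-1},Z_{n-1}\}$. To conclude I would run through the four combinations and discard those in which the two strings do not anticommute (as required above): $Y_{n-1}$ and $Y_n$ commute, and $Y_{n-1}Z_n$ and $Z_{n-1}Y_n$ anticommute in two positions and hence commute overall, so the only survivors are the pair $Y_{n-1}$, $Z_{n-1}Y_n$ and the pair $Y_{n-1}Z_n$, $Y_n$, which are exchanged by a SWAP on qubits $n-1$ and $n$ (a real orthogonal similarity transformation, cf.\ Eq.~\eqref{Eq: commutator_as_a_similatiry_transformation}). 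This is exactly the asserted canonical form together with the statement that there are precisely two options.

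I expect the real work to lie in making the even-string obstruction airtight rather than in any single computation. One must argue carefully that an even element of $G$ commuting with all the $Z_i$ genuinely forces its whole $\tilde G$-coset to be even, that this coset is nontrivial (it flips a nonempty set of qubits, so it really falls under the scope of Theorem~\ref{theorem:necassary_condition_of_completeness}), and that the resulting contradiction cannot be dodged by a cleverer choice of coset representatives---because a gap there is exactly what would let an over-sized group masquerade as minimal. Everything else---the SWAP and similarity normalizations, the identity $[[a,b],c]=0$, and the final four-case check---is routine once that point is secured.
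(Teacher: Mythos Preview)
Your proposal is correct and follows essentially the same route as the paper: the paper also uses the decomposition via $H=\langle Z_1,\dots,Z_k\rangle$ and $\tilde G=G/H$, counts $|\tilde G|=2^n$ from Theorem~\ref{theorem:necassary_condition_of_completeness}, derives the $n+k$ generator count, invokes the same even-string obstruction (an even string commuting with all $Z_i$ would produce an all-even coset in $\tilde G$), and concludes $k\ge n-2$ from the mutual anticommutation of the $Y_iQ^i$ and the three-term identity $[[a,b],c]=0$. Your explicit four-case check on qubits $n-1,n$ and the SWAP exchanging the two survivors is slightly more detailed than the paper's one-line assertion, but the argument is otherwise identical.
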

It turns out we can simplify the whole generator set even further:
\begin{lemma}
Using similarity transformations, $Z_1$, $Z_2$, ..., $Z_{n-2}$, $Y_1P_{n-2,n}^1$, $Y_2P_{n-2,n}^2$, ..., $Y_{n-2}P_{n-2,n}^n$, $Y_{n-1}$, $Z_{n-1}Y_n$ can be transformed into $Z_1$, $Z_2$, ..., $Z_{n-2}$, $Y_1$, $Y_2$, ...,$Y_{n-2}$, $Y_{n-1}$, $Z_{n-1}Y_n$.
\end{lemma}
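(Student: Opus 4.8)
We must show that the generating set $Z_1,\dots,Z_{n-2},\,Y_1P_{n-2,n}^1,\dots,Y_{n-2}P_{n-2,n}^{n-2},\,Y_{n-1},\,Z_{n-1}Y_n$, in which each $P_{n-2,n}^i$ is a string of $Z$'s and $I$'s on the last two qubits, can be carried to the canonical set $Z_1,\dots,Z_{n-2},\,Y_1,\dots,Y_{n-2},\,Y_{n-1},\,Z_{n-1}Y_n$ by the elementary operations used throughout this appendix: conjugations $\exp(\tfrac{\pi}{4}O)$ with $O$ an odd Pauli string (Eq.~\eqref{Eq: commutator_as_a_similatiry_transformation}), qubit SWAPs, and group-preserving generator replacements (swapping a generator for its product with others). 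The plan is to reduce this to an $\mathbf{F}_2$-linear statement about the $2n-2$ generators and then invoke the geometry of the real Clifford group. Write $a_i,b_i\in\{0,1\}$ so that $P_{n-2,n}^i=Z_{n-1}^{a_i}Z_n^{b_i}$.

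First I would verify, qubit by qubit and with the $O/E$ rules of Table~\ref{tab:OE_multiplication_rules} on qubits $n-1,n$, that the $2n-2$ source generators are independent (so the group $G$ they generate has order $2^{2n-2}$) and that no nontrivial element of $G$ commutes with all of them, i.e.\ the commutation form on $G$ is non-degenerate; the canonical set has the same two properties. Next, decompose $G$ into hyperbolic pairs: $Z_j$ anticommutes with $Y_jP_{n-2,n}^j$, so the pairs $(Z_j,\,Y_jP_{n-2,n}^j)$, $j=1,\dots,n-2$, span a non-degenerate $(2n-4)$-dimensional subspace; its symplectic complement in $G$ is two-dimensional, and correcting $Y_{n-1}$ and $Z_{n-1}Y_n$ by suitable products of the $Z_j$'s (which keeps them of $O$ type, since only $Z$'s are added) gives a basis of it. Hence the quadratic form on $G$ defined by the $O/E$ parity has Arf invariant $(n-2)\cdot 0 + 1 = 1$: each pair $(Z_j,\cdot)$ contributes $0$ because $Z_j$ is even, and the residual odd/odd pair coming from $Y_{n-1},Z_{n-1}Y_n$ contributes $1$. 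The canonical set has precisely this structure, so its Arf invariant is also $1$.

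Finally, the admissible conjugations $\exp(\tfrac{\pi}{4}O)$ (with $O$ odd) together with SWAPs act on $\mathbf{F}_2^{2n}$ as the real ($iY$-convention) Clifford group; inspecting the $O/E$ table shows that conjugation never changes whether a Pauli string is odd or even, so this group preserves not only the commutation form but also the $O/E$ quadratic form. Two subgroups of $\mathbf{F}_2^{2n}$ of equal dimension, both non-degenerate and with equal Arf invariant, therefore lie in a single orbit of this group (the $\mathbf{F}_2$ version of Witt's extension theorem). Thus there is an admissible similarity transformation carrying $G$ onto the canonical group, and a last round of group-preserving generator replacements then brings the generating set itself to canonical form.

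The step I expect to be the main obstacle is the bookkeeping behind non-degeneracy and the Arf computation for an arbitrary choice of the $P_{n-2,n}^i$: everything hinges on the $O/E$ multiplication table on the last two qubits and on how $P_{n-2,n}^i$ couples $Y_iP_{n-2,n}^i$ to the tail generators $Y_{n-1},Z_{n-1}Y_n$. One must also keep in mind that only the \emph{real} Clifford group is available, so it is the Arf invariant — not merely the symplectic type — that has to be matched. A direct constructive route, peeling the $P_{n-2,n}^i$ off one generator at a time by interleaving conjugations with generator replacements as in the proof of the preceding lemma, is also possible but more delicate, since a replacement that repairs one commutation relation can alter a generator's $O/E$ parity or break another relation, so the processing order and the precise replacements must be chosen carefully.
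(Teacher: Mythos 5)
Your argument is essentially correct, but it takes a genuinely different route from the paper. The paper's proof is a short, explicit conjugation protocol: two $\tfrac{\pi}{4}$ conjugations built on $Z_{n-1}Y_n$ and $Y_n$ (each dressed with the $Z_{i}$'s selected by which tails $P^i_{n-2,n}$ contain $Z_{n-1}$) strip $Z_{n-1}$ out of every $Y_iP^i_{n-2,n}$, the scrambled $Y_{n-1}$ is restored using the remaining generators, and then the analogous pair of conjugations built on $Y_{n-1}$ removes $Z_n$, after which $Z_{n-1}Y_n$ is restored --- entirely elementary and self-contained, using nothing beyond the $\tfrac{\pi}{4}$ identity and the generators at hand. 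You instead recast the problem in the $\mathbf{F}_2$ symplectic/quadratic-form language: your observation that the $O/E$ parity is a quadratic refinement of the commutation form (which is exactly what Table~\ref{tab:OE_multiplication_rules} encodes), the hyperbolic pairs $(Z_j, Y_jP^j_{n-2,n})$ together with the $Z_j$-corrected tail pair, and the Arf computation giving $1$ on both sides are all correct, and this viewpoint buys generality: it shows that \emph{any} two nondefective configurations of equal dimension and Arf invariant are equivalent, not just this particular family, and it explains why the canonical form is what it is. The costs are that you import two nontrivial facts without proof: Witt's extension theorem for quadratic forms in characteristic $2$ (legitimate here because you established nondefectiveness), and the claim that the conjugations $\exp(\tfrac{\pi}{4}O)$ with $O$ odd, together with SWAPs, realize the full orthogonal group of the $O/E$ form --- i.e., that these anisotropic transvections generate $O^{\pm}_{2n}(2)$. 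That generation statement is classical and true in the relevant range ($n\ge 3$, so $2n\ge 6$; for $n=2$ the lemma is vacuous since there are no tails), but it has a known low-dimensional exception and is of comparable depth to the lemma itself, so it should be cited or proved rather than asserted. Finally, your closing step uses generator replacements on top of similarity transformations; strictly the lemma statement mentions only the latter, but since the result is only ever used at the level of the product group (and the paper's preceding lemma itself mixes in generator replacements), this is a cosmetic rather than substantive deviation.
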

 \begin{proof}
We will prove this by presenting the protocol.
\begin{itemize}
	\item Step 1. Apply a similarity transformation generated by $Z_{n-1}Y_n\times Z_{i_1}Z_{i_2}....Z_{i_p}$ according to Eq.~\eqref{Eq: commutator_as_a_similatiry_transformation}. The operator $Z_i$ is present in this string if $P_{n-2,n}^i=Z_{n-1}I_n$ or if $P_{n-2,n}^i=I_{n-1}Z_n$ and is absent otherwise. \\
	\item Step 2. Apply a similarity transformation generated by $Y_n\times Z_{i_1}Z_{i_2}....Z_{i_p}$. The operator $Z_i$ is present in this string if $P_{n-2,n}^i=Z_{n-1}I_n$ or if $P_{n-2,n}^i=I_{n-1}Z_n$ and is absent otherwise. 
	These two steps remove $Z_{n-1}$ from all $P_{n-2,n}^i$ that initially contain this operator.\\ 
	\item Step 3. The first two steps scramble $Y_{n-1}$ into $X_{n-1}Y_nZ_{i_1}Z_{i_2}...Z_{i_p}$. Using other generators we can return it back to $Y_{n-1}$.\\
	\item Step 4. Apply a similarity transformation generated by $Y_{n-1}Z_nZ_{i_1}...Z_{i_p}$. The operator $Z_i$ is present if $P_{n-2,n}^i$ contains $Z_n$ and is absent otherwise. \\
	\item Step 5. Apply a similarity transformation generated by $Y_{n-1}Z_{i_1}...Z_{i_p}$. The operator $Z_i$ is present in this string if $P_{n-2,n}^i$ contains $Z_n$ and is absent otherwise. Steps 4 and 5 together remove $Z_n$ from each $P_{n-2,n}^i$ that initially contains it. \\
	\item Step 6. The two previous steps scramble the generator $Z_{n-1}Y_n$ into $X_{n-1}Y_nZ_{i_1}Z_{i_2}...Z_{i_p}$. Using other generators, we return it back to $Z_{n-1}Y_n$. \\
\end{itemize}
 \end{proof}

Now we are ready to formulate a theorem that summarizes several previous lemmas and coincides with Theorem~\ref{mainText:theorem:canonical_form_of_minimal_complete_pools} from the main text:
\begin{theorem}\label{theorem:canonical_form_of_minimal_complete_pools}
	A minimal complete pool must contain $2n-2$ Pauli string generators $O_1$,  $O_2$, ..., $O_{2n-2}$ and must generate a product group, which, up to a similarity transformation, coincides with the one generated by $Z_1$, $Z_2$, ..., $Z_{n-2}$, $Y_1$, $Y_2$, ..., $Y_{n-2}$, $Y_{n-1}$, $Z_{n-1}Y_n$. The corresponding algebra is a subset of odd strings from this group.
\end{theorem}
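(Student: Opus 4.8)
The plan is to read Theorem~\ref{theorem:canonical_form_of_minimal_complete_pools} as the assembly of the structural lemmas already in hand, with the number $2n-2$ emerging from a single parity obstruction. First I would use Lemma~\ref{lemma:algebra_is_a_subet_of_group} to move the whole question off of the generated Lie algebra and onto the product group $G$: since that algebra is contained in the odd strings of $G$, and completeness forces — via Theorem~\ref{theorem:necassary_condition_of_completeness} — the algebra to contain $2^n-1$ operators flipping all distinct nonempty subsets of qubits, the group $G$ must already contain, for each nonempty $S\subseteq\{1,\dots,n\}$, at least one Pauli string flipping exactly the qubits of $S$. So it suffices to classify the minimal product groups of Pauli strings with this flipping property.

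Next I would dissect $G$ by its $Z/I$ part. Let $H\le G$ be the subgroup of strings built only from $I$ and $Z$; it is elementary abelian, hence generated by $k$ independent strings $P_1,\dots,P_k$, and by the preceding similarity-transformation lemma (using Eq.~\eqref{Eq: commutator_as_a_similatiry_transformation} together with SWAPs) these may be normalized to $Z_1,\dots,Z_k$. Passing to $\tilde G=G/H$, Lemma~\ref{lemma:two_paulis_flipping_same_qubits} says each coset is a maximal family of Pauli strings flipping one fixed qubit set, so $|\tilde G|\le 2^n$; the flipping property forces equality, and therefore $\tilde G$ needs exactly $n$ more independent generators, which after further similarity transformations and renumbering take the form $Y_iP^i$ for $i\le k$ and $Y_jQ^j$ for $j>k$, with $P^i,Q^j$ strings of $Z$'s and $I$'s on the remaining qubits.

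The step I expect to be the actual crux — as opposed to the surrounding normalization bookkeeping — is pinning down $k\ge n-2$, i.e.\ that at most two of the generators $Y_jQ^j$ with $j>k$ can occur. Here the argument I would run is: if two such odd strings commuted, their product would be an \emph{even} string commuting with all of $Z_1,\dots,Z_k$ (Table~\ref{tab:OE_multiplication_rules}), producing a coset of $\tilde G$ made entirely of even strings and contradicting Theorem~\ref{theorem:necassary_condition_of_completeness}; so they must mutually anticommute, and for three mutually anticommuting odd strings $a,b,c$ one has $[[a,b],c]=0$, so $abc$ is again a forbidden even $Z$-commuting string. Hence at most two such generators exist, forcing $k\ge n-2$ and a total generator count $k+n\ge 2n-2$. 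Combined with the explicit size-$(2n-2)$ complete pool of Appendix~B of Ref.~\cite{Tang_PRXQuantum2021}, this gives that the minimum is exactly $2n-2$ and that at the minimum $k=n-2$, with the two leftover generators normalizable to $Y_{n-1}$ and $Z_{n-1}Y_n$ (the only other option being the SWAP-equivalent pair $Y_{n-1}Z_n,\,Y_n$).

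Finally I would invoke the last normal-form lemma — the six-step protocol built out of Eq.~\eqref{Eq: commutator_as_a_similatiry_transformation} — to strip the tails $P^i$ from the generators $Y_iP^i$, arriving at the canonical generating set $Z_1,\dots,Z_{n-2},Y_1,\dots,Y_{n-2},Y_{n-1},Z_{n-1}Y_n$ claimed in the theorem. The last sentence of the theorem, that the generated Lie algebra is a subset of the odd strings of this $G$, is then just Lemma~\ref{lemma:algebra_is_a_subet_of_group} applied to the canonical $G$. The only genuinely delicate ingredient in this chain is the parity count giving $k\ge n-2$; everything upstream and downstream of it is controlled conjugation via Eq.~\eqref{Eq: commutator_as_a_similatiry_transformation}.
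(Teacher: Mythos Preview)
Your proposal is correct and follows essentially the same route as the paper: you use Lemma~\ref{lemma:algebra_is_a_subet_of_group} and Theorem~\ref{theorem:necassary_condition_of_completeness} to reduce to the product group, normalize the $Z/I$ subgroup to $Z_1,\dots,Z_k$, analyze the quotient $\tilde G$ via Lemma~\ref{lemma:two_paulis_flipping_same_qubits}, and then obtain $k\ge n-2$ from exactly the same parity obstruction (commuting pairs give a forbidden even coset; three mutually anticommuting generators satisfy $[[a,b],c]=0$, making $abc$ a forbidden even $Z$-commuting string), before invoking the six-step protocol and the explicit pool of Ref.~\cite{Tang_PRXQuantum2021} to close the argument. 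This is precisely the chain of lemmas the paper assembles into Theorem~\ref{theorem:canonical_form_of_minimal_complete_pools}.
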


From now on whenever we refer to an MCP, we will always assume it generates the group in the canonical form above.
\begin{lemma}
The number of odd Pauli strings in the group $G$ is exactly $\frac{2^{n-1}(2^{n-1}+1)}{2}$. 
\label{lemma:number_of_odd_strings_in_complete_group}
\end{lemma}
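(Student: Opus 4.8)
The plan is to count the odd strings in $G$ by a sign-averaging (character-sum) argument, using that $G$ is, by the preceding lemmas, the elementary abelian $2$-group of order $2^{2n-2}$ freely generated by the canonical generators $Z_1,\dots,Z_{n-2},Y_1,\dots,Y_{n-2},Y_{n-1},Z_{n-1}Y_n$. For any finite collection of Pauli strings, the number that are odd equals $\tfrac12\bigl(|G|-S\bigr)$ with $S:=\sum_{g\in G}(-1)^{\#_Y(g)}$, where $\#_Y(g)$ denotes the number of $Y$ operators in $g$. So the whole problem reduces to evaluating $S$.

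First I would parametrize a general $g\in G$ by its generator-inclusion vector: let $a_i\in\{0,1\}$ record whether $Z_i$ appears and $b_i\in\{0,1\}$ whether $Y_i$ appears (for $i=1,\dots,n-2$), $c$ whether $Y_{n-1}$ appears, and $d$ whether $Z_{n-1}Y_n$ appears. Since the only generators touching qubit $i\le n-2$ are $Z_i,Y_i$ and their product is $X_i$, not $Y_i$, the local operator on that qubit is $Y$ iff $(a_i,b_i)=(0,1)$; similarly on qubit $n-1$ one gets $Y$ iff $(c,d)=(1,0)$ (using $Y_{n-1}Z_{n-1}=X_{n-1}$), and on qubit $n$ one gets $Y$ iff $d=1$. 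Hence $\#_Y(g)$ is a \emph{sum} of per-qubit indicators, $(-1)^{\#_Y(g)}$ factorizes, and $S$ splits as a product over the $n-2$ ``generic'' qubits times a factor for the last two qubits. Each generic qubit contributes $\sum_{(a_i,b_i)}(-1)^{[(a_i,b_i)=(0,1)]}=3-1=2$, and the block $\{n-1,n\}$ contributes $\sum_{(c,d)}(-1)^{[(c,d)=(1,0)]+[d=1]}=1-1-1-1=-2$. Therefore $S=2^{\,n-2}\cdot(-2)=-2^{\,n-1}$, and the number of odd strings is $\tfrac12\bigl(2^{2n-2}+2^{\,n-1}\bigr)=\tfrac{2^{\,n-1}(2^{\,n-1}+1)}{2}$, as claimed. (Equivalently, one avoids the sign trick: stratifying by the number $k$ of qubits among $1,\dots,n-2$ that carry a $Y$, the rest each admit $3$ local operators, so the parity-sorted counts come from $\sum_k\binom{n-2}{k}3^{\,n-2-k}=4^{\,n-2}$ and $\sum_k(-1)^k\binom{n-2}{k}3^{\,n-2-k}=2^{\,n-2}$ after separately handling the four cases of $(c,d)$.)

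I expect the only real obstacle to be bookkeeping rather than technique. Two points require care: the map $g\mapsto\#_Y(g)\bmod 2$ is \emph{not} a homomorphism on $G$ — multiplying $Z_i$ into $Y_i$ deletes a $Y$ — so the answer is genuinely larger than $|G|/2$; and qubits $n-1$ and $n$ must be treated as one block because both $Y_{n-1}$ and $Z_{n-1}Y_n$ act nontrivially on qubit $n-1$. Once the per-qubit and final-block local sums are computed correctly, the rest is immediate.
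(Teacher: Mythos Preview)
Your argument is correct. The parametrization of $G$ by the $(2n-2)$-bit inclusion vector is legitimate since the generators are independent, your local analysis of which $(a_i,b_i)$ or $(c,d)$ produce a $Y$ on each qubit is accurate, and the factorization of $S$ over qubits (with the last two treated as a single block) gives $S=-2^{n-1}$ and hence the claimed count.

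Your route differs substantially from the paper's. The paper proceeds by direct combinatorial enumeration: it first restricts to the subgroup generated by $Z_1,\dots,Z_{n-2},Y_1,\dots,Y_{n-2}$, stratifies its elements by how many qubits they \emph{flip} (i.e., carry $X$ or $Y$), and for each such count uses binomial identities like $C_{2k+1}^0+C_{2k+1}^2+\cdots=2^{2k}$ to separate odd from even strings; summing over the stratification gives $\text{Odd}=(2^{n-2}-1)2^{n-3}$ and $\text{Even}=(2^{n-2}+1)2^{n-3}$. Then it notes that the three nontrivial elements $Y_{n-1}$, $Z_{n-1}Y_n$, $X_{n-1}Y_n$ of the final block are all odd, so adjoining them yields $\text{Odd}+3\cdot\text{Even}$ total odd strings. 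Your sign-averaging argument is shorter and avoids the binomial-sum bookkeeping entirely; it also makes transparent why the answer exceeds $|G|/2$ (the map $g\mapsto(-1)^{\#_Y(g)}$ is not a homomorphism, so $S\neq 0$). The paper's approach, on the other hand, is more elementary and yields as a byproduct the intermediate odd/even counts on the first $n-2$ qubits, which may be useful elsewhere.
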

\begin{proof}
	We define the binomial coefficient $C_n^k=\frac{n!}{k!(n-k)!}$. This coefficient represents the number of ways we can choose $k$ out of $n$ elements. 
	
	Let us first compute the number of all odd Pauli strings, generated by $Z_1$, $Z_2$, ..., $Z_{n-2}$, $Y_1$, $Y_2$, ..., $Y_{n-2}$. Let us consider all such strings that flip certain sets of $2k+1$ qubits. As an example, we can take $Y_1Y_2...Y_{2k+1}$. We can transform an even number of $Y$ operators into $X$ in this string using $Z_1$, ..., $Z_{2k+1}$ operators. The number of strings we obtain in this way is
	\begin{equation}
		C_{2k+1}^0+C_{2k+1}^2+...+C_{2k+1}^{2k}=2^{2k}.
	\end{equation} 
	Each such string can also be multiplied by any sequence of $Z_{2k+2}$, ..., $Z_{n-2}$, of which there are $2^{n-2-(2k+1)}$. So the total number of odd strings that flip qubits $1,2,...,2k+1$ is 
	\begin{equation}
		2^{2k}\times2^{n-2-(2k+1)}=2^{n-3}.
	\end{equation}
	The total number of strings that flip all possible sets of $2k+1$ qubits is
	\begin{equation}
		C_{n-2}^{2k+1}2^{n-3}.
	\end{equation}
	If we consider all odd strings that flip $2k$ qubits, their number is
	\begin{equation}
		\begin{aligned}
			&C_{n-2}^{2k}(C_{2k}^{1}+C_{2k}^3+...+C_{2k}^{2k-1})\times2^{n-2-2k}=\\
			&C_{n-2}^{2k}\times2^{2k-1}\times2^{n-2-2k}=C_{n-2}^{2k}\times2^{n-3}.
		\end{aligned}
	\end{equation}
	So the number of all odd strings generated by $Z_1$, $Z_2$, ..., $Z_{n-2}$, $Y_1$, $Y_2$, ..., $Y_{n-2}$ is
	\begin{equation}
		\begin{aligned}
			\text{Odd}=&(C_{n-2}^{1}+C_{n-2}^{2}+...+C_{n-2}^{n-2})\times2^{n-3}\\
			                       =&(2^{n-2}-1)\times2^{n-3}.\\
		\end{aligned}
	\end{equation}
	The number of even strings generated by $Z_1$, $Z_2$, ..., $Z_{n-2}$, $Y_1$, $Y_2$, ..., $Y_{n-2}$ is $2^{2n-4}$ minus the number of all odd strings, which amounts to 
	\begin{equation}
		\text{Even}=(2^{n-2}+1)\times2^{n-3}.
		\label{Eq:even_strings}
	\end{equation}
	If we now add two more generators, $Y_{n-1}$ and $Z_{n-1}Y_n$, (they generate the third operator $X_{n-1}Y_n$), then they generate new odd strings when multiplied by the even strings from Eq.~\eqref{Eq:even_strings}, and so the total number of odd strings becomes
	\begin{equation}
		\text{Odd}+3\times\text{Even}=\frac{2^{n-1}(2^{n-1}+1)}{2}.
	\end{equation}
\end{proof}

Up to this point, we only considered the size and structure of the product group an MCP would need to generate. This is a necessary condition of completeness, but at this point we still have not shown that pools satisfying this condition exist. Such pools do indeed exist. In Appendix B of Ref.~\cite{Tang_PRXQuantum2021}, a pool of size $2n-2$ was constructed explicitly and shown to transform any real state into any other state. One can check that up to renumbering of the qubits, this pool generates the group in Theorem~\ref{theorem:canonical_form_of_minimal_complete_pools}.

If a pool generates an algebra that spans all odd strings from the group in Theorem~\ref{theorem:canonical_form_of_minimal_complete_pools}, it will generate exactly the same algebra as the pool from Appendix B of \cite{Tang_PRXQuantum2021}, thus proving that this pool is also complete. This is a sufficient condition of completeness that we formulate in the following theorem: 
\begin{theorem}[sufficient condition of completeness]
 If a pool $O_1$, $O_2$, ..., $O_{2n-2}$ generates an algebra that spans all odd Pauli strings from the group in Theorem~\ref{theorem:canonical_form_of_minimal_complete_pools}, then this pool is minimal and complete. 
 \label{theorem: algebra sufficient condition}
\end{theorem}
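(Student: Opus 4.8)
The plan is to reduce the claim to the already-established completeness of the explicit pool built in Appendix B of Ref.~\cite{Tang_PRXQuantum2021}, which I will call $\mathcal{P}_T$. First I would dispose of minimality, which is cheap: by the chain of lemmas culminating in Theorem~\ref{theorem:canonical_form_of_minimal_complete_pools}, every complete pool of Pauli strings must contain at least $2n-2$ generators, and the pool in the statement has exactly $2n-2$ elements; so it is minimal as soon as it is shown to be complete. Thus the whole content of the proof is completeness.

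For completeness I would proceed in three steps. (i) Recall that $\mathcal{P}_T$ is complete (proven in Ref.~\cite{Tang_PRXQuantum2021}) and that, after the appropriate renumbering of qubits---an orthogonal conjugation by a permutation matrix, hence a similarity transformation of the kind used throughout the Appendix---it generates exactly the canonical product group $G$ of Theorem~\ref{theorem:canonical_form_of_minimal_complete_pools}. (ii) By Lemma~\ref{lemma:algebra_is_a_subet_of_group}, the Lie algebra $\mathcal{A}_T$ generated by $\mathcal{P}_T$ is contained in the space $\mathcal{A}$ spanned by the odd Pauli strings of $G$, while by hypothesis the pool $O_1,\dots,O_{2n-2}$ generates precisely $\mathcal{A}$; hence $\mathcal{A}_T\subseteq\mathcal{A}$. (iii) Use the standard Lie-theoretic fact already invoked in the proof of Theorem~\ref{theorem:necassary_condition_of_completeness}: the set of real states reachable from a fixed real state by finite products $\prod_i\exp(\alpha_i O_i)$ of pool unitaries is the orbit of that state under the connected subgroup of the orthogonal group $SO(2^n)$ whose Lie algebra is the algebra generated by the pool. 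Writing $H$ and $H_T$ for the connected subgroups attached to $\mathcal{A}$ and $\mathcal{A}_T$, the inclusion $\mathcal{A}_T\subseteq\mathcal{A}$ gives $H_T\subseteq H$. Since $\mathcal{P}_T$ is complete, $H_T\ket{\phi}$ is the whole set of unit real states for every real $\ket{\phi}$, and a fortiori so is $H\ket{\phi}$. Hence the pool $O_1,\dots,O_{2n-2}$ is complete, and therefore minimal.

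The step I expect to need the most care is the identification in (iii) of the reachable set with the orbit of the connected Lie group determined by the \emph{generated} Lie algebra, rather than by the individual generators; this is the orbit theorem of geometric control theory, equivalently Yamabe's theorem on arcwise-connected subgroups of a Lie group. Because everything happens inside the compact group $SO(2^n)$ acting on real vectors of dimension $2^n$ (via the $iY$ convention), no closure or density pathologies arise and the statement is clean. Two cosmetic points also cost a line each: the sign/normalization ambiguity in the phrase ``real state'' is harmless, since $\exp(\mathcal{A})\subset SO(2^n)$ preserves the norm and completeness is insensitive to an overall sign; and ``up to renumbering of the qubits'' is a legitimate similarity transformation, so $\mathcal{P}_T$ and the pool in the statement may be taken to refer to one and the same canonical $G$. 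A self-contained but much heavier alternative---which I would mention but not carry out---is to show directly that $\exp(\mathcal{A})$ acts transitively on the sphere $S^{2^n-1}$ by identifying $\mathcal{A}$ (of dimension $\tfrac{2^{n-1}(2^{n-1}+1)}{2}$ by Lemma~\ref{lemma:number_of_odd_strings_in_complete_group}) with a symplectic Lie algebra and invoking the classification of transitive actions on spheres; comparison with $\mathcal{P}_T$ sidesteps all of this.
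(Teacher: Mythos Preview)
Your proposal is correct and follows essentially the same route as the paper: both reduce to the already-established completeness of the explicit pool $\mathcal{P}_T$ of Ref.~\cite{Tang_PRXQuantum2021}, using the principle that the reachable set depends only on the Lie algebra generated by the pool. The paper's argument is the single sentence preceding the theorem (``it will generate exactly the same algebra as the pool from Appendix B of \cite{Tang_PRXQuantum2021}, thus proving that this pool is also complete''), so your write-up is considerably more explicit on the Lie-theoretic step. One small difference worth noting: the paper asserts \emph{equality} $\mathcal{A}_T=\mathcal{A}$ (implicitly relying on the additional fact that $\mathcal{P}_T$ already generates all odd strings of $G$), whereas your step (ii) uses only the containment $\mathcal{A}_T\subseteq\mathcal{A}$ furnished by Lemma~\ref{lemma:algebra_is_a_subet_of_group}. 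Your version is therefore slightly cleaner, since it does not require separately verifying that $\mathcal{P}_T$ attains the full odd-string algebra.
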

 Therefore the pools that span all odd strings from the group $G$ form a class of complete pools that are unique up to similarity transformations.
 
 We now formulate one more necessary condition of completeness that is extremely useful when conducting searches for complete pools:
\begin{theorem}[necessary condition of completeness]
Let a pool of $2n-2$ Pauli strings $O_1$, $O_2$, ..., $O_{2n-2}$ generate the product group $G$ from Theorem~\ref{theorem:canonical_form_of_minimal_complete_pools}. If we can split the pool into two sets of operators \{$A_1$, ..., $A_k$\} and \{$B_1$, ..., $B_{2n-2-k}$\}, such that $[A_i,B_j]=0$ for any $i$ and $j$, then the pool is incomplete. 
\label{theorem: inseparability_criterion}
\end{theorem}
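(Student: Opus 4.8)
The plan is to prove the contrapositive: if the pool splits into two mutually commuting sets $A=\{A_1,\dots,A_k\}$ and $B=\{B_1,\dots,B_{2n-2-k}\}$ with $[A_i,B_j]=0$ for all $i,j$, then it is not complete. The first step is Lie-algebraic. For each fixed $B_j$ the operators commuting with $B_j$ form a subalgebra that contains every $A_i$, hence it contains the algebra $\mathcal{A}$ generated by $A$; running the same argument the other way around gives $[\mathcal{A},\mathcal{B}]=0$, where $\mathcal{B}$ is the algebra generated by $B$. Therefore $\mathcal{A}+\mathcal{B}$ is itself a Lie algebra containing the pool, so the algebra $\mathcal{L}$ generated by the full pool equals $\mathcal{A}+\mathcal{B}$ with $\mathcal{A}$ and $\mathcal{B}$ commuting ideals, and both are nonzero since each pool element $O$, being an odd string with $O^2=-I$, already spans a one-dimensional subalgebra. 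By Lemma~\ref{lemma:algebra_is_a_subet_of_group}, $\mathcal{A}$ sits inside the odd Pauli strings of the product group generated by $A$; since the full pool generates $G$ the $A_i$ are independent, so this product group has order $2^k$ and hence $\dim\mathcal{A}\le 2^{k-1}$, and likewise $\dim\mathcal{B}\le 2^{2n-3-k}$.

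The second step recasts completeness group-theoretically. Since every $A_i$ commutes with every $B_j$, any finite product of gates $\exp(\alpha\hat P)$ can be reordered so that all $A$-factors precede all $B$-factors; hence the reachable unitaries form the commuting product of the two compact connected groups generated by $A$ and by $B$, a subgroup of $SO(2^n)$ whose Lie algebra is $\mathcal{L}$. Completeness says precisely that this group acts transitively on the unit sphere of $\mathbb{R}^{2^n}$ (every real state is reachable from every other).

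The third and decisive step is to show this is impossible. I would invoke the Montgomery--Samelson--Borel classification of compact connected groups acting transitively on spheres: such a group is either simple (one of $SO(m)$, $SU(m)$, $Sp(m)$, $G_2$, $Spin(7)$, $Spin(9)$) or an ``almost product'' $U(m)=SU(m)\cdot U(1)$, $Sp(m)\cdot Sp(1)$, or $Sp(m)\cdot U(1)$. A commuting product of two positive-dimensional groups is never simple, and in each almost-product case one of the two factors must contain the ``large'' simple constituent, namely a copy of $\mathrm{su}(2^{n-1})$ of dimension $2^{2n-2}-1$ or of $\mathrm{sp}(2^{n-2})$ of dimension $\tfrac{2^{n-1}(2^{n-1}+1)}{2}$ (the latter being, by Lemma~\ref{lemma:number_of_odd_strings_in_complete_group}, the total number of odd strings in $G$). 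But $\dim\mathcal{A}\le 2^{k-1}$ and $\dim\mathcal{B}\le 2^{2n-3-k}$ are both strictly below each of these dimensions for every $n\ge 2$, a contradiction. Hence a split pool has a proper orbit on the sphere, i.e.\ it is incomplete.

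The step I expect to be the genuine obstacle is the last one: importing the transitive-sphere classification cleanly, and in particular checking that in the two almost-product cases the arithmetic really forces the offending large factor inside one of $\mathcal{A}$ or $\mathcal{B}$, so that the dimension bounds of step one apply. For few qubits ($n\le 3$) this heavy input is unnecessary, because then $\dim\mathcal{L}\le\dim\mathcal{A}+\dim\mathcal{B}\le 1+2^{2n-4}$ is already smaller than $2^n-1=\dim S^{2^n-1}$, so the orbit cannot be the whole sphere; only for $n\ge 4$ is a structural argument beyond counting dimensions required. A secondary, routine point is to justify that reachability by finite products coincides with the orbit of the connected group with algebra $\mathcal{L}$ and that ``complete'' is the same as transitivity on the sphere up to an overall sign.
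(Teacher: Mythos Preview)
Your argument is sound in outline but takes a very different and much heavier route than the paper. One slip to fix: the bound $\dim\mathcal{A}\le 2^{k-1}$ is false---the fraction of odd strings in a Pauli product group can exceed one half (indeed by Lemma~\ref{lemma:number_of_odd_strings_in_complete_group} the full group $G$ has $\tfrac{2^{n-1}(2^{n-1}+1)}{2}>2^{2n-3}$ odd strings out of $2^{2n-2}$; for a two-generator example, $Y_1$ and $Z_1Y_2$ generate a group of order $4$ with three odd strings). The correct bound is simply $\dim\mathcal{A}\le 2^{k}-1$; since $1\le k\le 2n-3$ this still gives $\max(\dim\mathcal{A},\dim\mathcal{B})\le 2^{2n-3}-1$, which is strictly below both $\dim\mathrm{su}(2^{n-1})=2^{2n-2}-1$ and $\dim\mathrm{sp}(2^{n-2})=2^{2n-3}+2^{n-2}$, so your contradiction survives.

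The paper avoids the Montgomery--Samelson--Borel classification entirely and stays within the elementary Pauli-string combinatorics already developed in the appendix. It argues directly via Theorem~\ref{theorem:necassary_condition_of_completeness}: since every element of the product subgroup generated by $\{A_i\}$ commutes with every element of the one generated by $\{B_j\}$, Lemma~\ref{lemma:two_paulis_flipping_same_qubits} forbids the same flipping from appearing as an odd string in one subgroup and an even string in the other; hence any flipping not realised by an odd string in one subgroup is absent from that subgroup altogether. Picking one such missing flipping from each side and multiplying the corresponding odd strings from the other side then yields a flipping absent from both subgroups; because $\mathcal{L}=\mathcal{A}+\mathcal{B}$, that flipping is missing from the Lie algebra, violating the necessary condition. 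Your approach has the conceptual advantage of isolating the obstruction cleanly (a transitive sphere action cannot have Lie algebra splitting into two commuting nonzero ideals of bounded dimension) and would generalise beyond the Pauli setting, but it imports machinery far beyond what is needed here; the paper's proof is a few lines of Pauli algebra.
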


\begin{proof}
Let us assume the opposite is true and the pool $O_1$, $O_2$, ..., $O_{2n-2}$ is complete. Each of the new sets $\{A\}=\{A_1, ..., A_k\}$ and $\{B\}=\{B_1, ..., B_{2n-2-k}\}$ generate subgroups $A$ and $B$ of the group $G$. Because the number of generators of each of these groups is smaller than $2n-2$, we know that the set ${A}$ is incomplete and the set ${B}$ is incomplete, as they fail to fulfill the necessary condition of completeness (Theorem~\ref{theorem:necassary_condition_of_completeness}), i.e. generating $2^n-1$ different odd flippings in each of the groups $A$ and $B$. Now let us assume that some odd flipping is not part of the group $A$, but the corresponding even flipping $E_A$ is. (For example the odd string that flips the first qubit only $Y_1Z_iZ_j...$ is not in $A$, but the even string $X_1Z_kZ_m...$ is there.) Then necessarily the group $B$ must contain the corresponding odd flipping $O_B$ in order for the original pool to be complete (the algebra is a subset of the groups $A$ and $B$). But in that case, according to Lemma 2, $E_A$ and $O_B$ anticommute. This cannot be, as all elements in $A$ must commute with all elements in $B$ due to the corresponding property of the generators. This contradiction shows that if an odd flipping is not part of $A$, then the corresponding even flipping is also not there, and thus the flipping (call it $F_1$) is completely absent in $A$. Then this flipping must be present in $B$ with corresponding operator $O_B$. Analogously, the group $B$ does not contain a certain flipping (call it $F_2$), but it must be contained in $A$, and we call the corresponding operator $O_A$. Let us now consider the new flipping ($F$), implemented by the product $O_AO_B$. This flipping can be part of neither $A$ nor $B$, as in combination with $O_A$ it would generate the flipping $F_1$ and in combination with $O_B$ it would generate $F_2$, but these flippings must not be present in the corresponding groups $A$ and $B$. Thus the corresponding flipping ($F$) is absent in $A$ and $B$, and the necessary condition of completeness cannot be fulfilled for the original pool. Thus our assumption that the original pool $O_1$, $O_2$, ..., $O_{2n-2}$ is complete is wrong, and it must instead be incomplete.
\end{proof}

We now formulate a theorem that is extremely useful when searching for minimal complete pools numerically and that coincides with the completeness criteria (Theorem~\ref{mainText:theorem:completeness_criterion}) from the main text.
\begin{theorem}[completeness criterion]
	Let a pool of $2n-2$ Pauli string generators $O_1$, $O_2$, ..., $O_{2n-2}$ generate a product group $G$ as defined in Theorem~\ref{theorem:canonical_form_of_minimal_complete_pools}. The following statements are equivalent:
	\begin{itemize}
		\item (a) The pool $O_1$, $O_2$, ..., $O_{2n-2}$ is complete.
		\item (b) The pool $O_1$, $O_2$, ..., $O_{2n-2}$ cannot be split into two mutually commuting sets.
		\item (c) The algebra generated by $O_1$, $O_2$, ..., $O_{2n-2}$ spans all odd strings from the group $G$.\\
	\end{itemize}
\label{theorem:completeness_criterion}
\end{theorem}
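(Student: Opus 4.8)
The plan is to establish the equivalence by proving the cycle of implications $(c)\Rightarrow(a)\Rightarrow(b)$ and then $(b)\Rightarrow(c)$, reusing the machinery already in place. The implication $(c)\Rightarrow(a)$ is essentially Theorem~\ref{theorem: algebra sufficient condition}: if the algebra generated by the pool spans all odd strings of $G$, then it coincides with the algebra generated by the explicit complete pool of Ref.~\cite{Tang_PRXQuantum2021}, and since that pool transforms any real state into any other, so does ours. The only care needed here is to confirm that having the same Lie algebra (not merely the same group) suffices to conclude completeness, which follows because the set of states reachable by finite products $\prod_i\exp(\alpha_i O_i)$ is controlled by the connected Lie group generated by the algebra, and this group acts transitively on the unit sphere of real states exactly when the algebra is large enough — the content already extracted in Theorem~\ref{theorem:necassary_condition_of_completeness} and its surrounding discussion. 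The implication $(a)\Rightarrow(b)$ is precisely Theorem~\ref{theorem: inseparability_criterion} (the inseparability criterion): if the pool splits into two mutually commuting sets, then neither subgroup can realize all $2^n-1$ odd flippings, and the combinatorial argument there shows some flipping is absent from both, contradicting Theorem~\ref{theorem:necassary_condition_of_completeness}.

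The substantive remaining step is $(b)\Rightarrow(c)$. Here I would argue by contradiction: suppose the pool $O_1,\dots,O_{2n-2}$ cannot be split into two mutually commuting sets, yet the generated Lie algebra $\mathfrak{g}$ fails to span all odd strings of $G$. The strategy is to show that a deficient algebra forces a separability of the generators. One would study the structure of $\mathfrak{g}$ as a subspace of the odd strings of $G$: because $G$ has the canonical form of Theorem~\ref{theorem:canonical_form_of_minimal_complete_pools}, the odd strings carry a natural bilinear "commutation" pairing (two strings pair to $0$ if they commute and $1$ if they anticommute), and $\mathfrak{g}$ is closed under the induced bracket. The idea is to identify an invariant under the adjoint action — for instance, partitioning the generator set according to which "flipping classes" their iterated commutators can reach, or by analyzing a quotient of $G$ in which the deficiency manifests as a proper invariant subspace — and then show that if $\mathfrak{g}$ misses some odd string, the $O_i$'s fall into two blocks that never talk to each other under commutation, which is exactly mutual commutativity of two subsets.

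I expect $(b)\Rightarrow(c)$ to be the main obstacle, and indeed the excerpt itself flags this: the authors state they have no analytical proof and rely on numerical verification. So a realistic proposal is to prove $(c)\Rightarrow(a)\Rightarrow(b)$ rigorously, and for the reverse direction to either (i) reduce it to a cleaner equivalent statement about the combinatorics of odd/even flippings in the canonical group $G$ — e.g., that an abelian $\mathbb{F}_2$-structure together with the "no mutually commuting bipartition" condition forces the generated bracket-closure to be full — or (ii) explicitly present it as a conjecture supported by exhaustive checks for small $n$, as the authors do. In the writeup I would make the logical status transparent: the chain $(c)\Rightarrow(a)\Rightarrow(b)$ gives that $(c)$ and $(a)$ are each sufficient conditions that imply the easily checkable necessary condition $(b)$, and the equivalence is completed by the (numerically corroborated) claim that $(b)$ already forces $(c)$.
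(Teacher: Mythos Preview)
Your proposal matches the paper's own proof exactly: the authors establish $(c)\Rightarrow(a)$ via Theorem~\ref{theorem: algebra sufficient condition} and $(a)\Rightarrow(b)$ via Theorem~\ref{theorem: inseparability_criterion}, and explicitly leave $(b)\Rightarrow(c)$ as a numerically supported statement without analytical proof. Your additional sketch of how one might attack $(b)\Rightarrow(c)$ goes beyond what the paper offers, but you correctly identify its conjectural status.
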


\begin{proof}
	If (a) is true, then (b) must be true according to Theorem~\ref{theorem: inseparability_criterion}.	If (c) is true then (a) must be true according to Theorem~\ref{theorem: algebra sufficient condition}.
	Proving that (c) follows from (b) turns out to be very challenging. For now we do not have an analytical proof, but all our numerical calculations confirm this is true, and we use this statement in practice.
\end{proof}
\begin{remark}
Theorem~\ref{theorem:completeness_criterion} is a criterion that is proved half analytically and half numerically. The condition (c) can safely be used to search for complete pools, as its applicability has been proven analytically. At the same time, computing the Lie algebra for a given pool is very resource-intensive. If one needs to check many pools for completeness and select one based on some other criterion, it might take too long to do that. This is why condition (b) is extremely useful, as its computational complexity scales polynomially with the size of the pool, and thus it is a lot easier and faster to use.
\end{remark}

\section{Preparation of the pool for molecular simulations LiH and BeH$_2$}\label{app:PoolConstruction}
In this appendix we analyze the pools used to simulate the LiH and BeH$_2$ molecules. We already identified the five conditions each pool must fulfill:
\begin{enumerate}\label{enum:symmetry_constraints_on the pool}
    \item The number of electrons with a given spin changes by a multiple of 2 by any Pauli from the pool. So there is an even number of $X$, $Y$ operators acting on $\alpha$ orbitals and there is an even number of $X$, $Y$ operators acting on $\beta$ orbitals.
    \item Each Pauli operator must transform as the $A_1$ irreducible representation of the symmetry group. This condition means that the Pauli strings do not change the symmetry of the states they act upon. We note that for more complicated molecules such Pauli strings might not exist, in which case this condition should be replaced by the requirement that the leakage of states from the $A_1$ symmetry space must be minimal.
    \item The pool must contain enough starters for ADAPT-VQE to start. A starter is a Pauli string that conserves the symmetries of the Hartree-Fock state and contains exactly 4 $X$, $Y$ operators.
    \item The pool generates the biggest subgroup and subalgebra of those generated by an MCP, that contain Pauli strings obeying conditions 1-2.
\end{enumerate}

Conditions 3 and 4 here deserve a special comment. We do not have an exact answer on how many starters must be in the pool to fulfill condition 3. For our simple simulations, it was sufficient to have around half of the operators as starters. We note that one cannot take all operators in the pool to be starters. In that case the generated subgroup and subalgebra will not contain operators that destroy an odd number of electrons from the Hartree-Fock state and recreate them in other orbitals. We note however that a pool can contain all but one operator as starters. We gave this example already in the pool we used to simulate the H$_4$ molecule (Eq.~\eqref{Pool:H4}). 

For the cases of LiH and BeH$_2$, it is still possible to check condition 4 and make sure the pool generated the right subgroup and subalgebra (the biggest ones possible that obey conditions 1-2 below). For bigger molecules of course that might be challenging, so one might only check the group and inseparability condition instead.

Below we analyze the pools used to simulate the LiH and BeH$_2$ molecules and show that the operators in the pool obey conditions 1-2.
\subsection{Pool for LiH 10-qubit problem}\label{app:LiH}
To simulate the LiH molecule we used the following pool (see \cite{source_code_LiH} for details on the code used to generate the pool):
\begin{equation}
	\begin{aligned}
		&\text{XYYZIIZIZY}, \text{XYYYIZZZII}, \text{YYIZZZIZXY}, \\
		&\text{XXZXZIIIYI},\text{XYZYIZZIYI}, \text{XXXZIIZZZY},\\
		&\text{XXIIYXZZII}, \text{XYXZXXXYZY},\text{XXIYIIXYZY}, \\
		&\text{IIZIZZYYXY}, \text{ZZXZXXIIZY}, \text{YZZZXYZZZY},\\
		&\text{YXZZIZYYII}, \text{IXIZXXZZYI}.
	\end{aligned}
	\label{app:Pool:LiH}
\end{equation}

 In order to describe the pool in terms of symmetry, we must first put symmetry and spin labels on each of the 10 spin-orbitals. The fact that we use Hartree-Fock as an initial state means the single-particle orbitals are hybridized and can no longer belong to a particular atom as in Table~\ref{tab:LiH_character_table}. Still, single-particle excitations only mix states of the same symmetry, so we only have three single-qubit orbitals belonging to $A_1$ irreducible representation. Instead of marking the orbital with an atomic notation ($s$, $p$,...), we will label each orbital in terms of the irreducible representation it belongs to from Table \ref{tab:LiH_character_table}. Then our orbitals are represented on the simulator in the following way:
\begin{equation}
    A_1^\alpha A_1^\beta A_1^\alpha A_1^\beta C^\alpha C^\beta B^\alpha B^\beta A_1^\alpha A_1^\beta,
    \label{eq:labeling of the orbitals for LiH}
\end{equation}
while the Hartree-Fock state in the frozen core approximation looks like
\begin{equation}
    \ket{1100000000},
    \label{eq:Hartree-Fock_LiH} 
\end{equation}
where the indices $\alpha,\beta$ stand for the $1/2$ and $-1/2$ spin projections, respectively. Let us now analyze each operator from the pool in Eq.~\eqref{app:Pool:LiH} and see how it obeys the symmetry conditions we identified.
\begin{itemize}[noitemsep]
     \item  $\text{XYYZIIZIZY}$$-$starter
     
     When acting upon the Hartree-Fock state, this Pauli string destroys electrons in the first $A_1^\alpha$ and $A_1^\beta$ orbitals and creates them in the second $A_1^\alpha$ and third $A_1^\beta$ orbitals respectively. The total spin projection and number of particles of the Hartree-Fock state do not change. It also remains in the $A_1$ symmmetry subspace. That means this Pauli string conserves the symmetries of the Hartree-Fock state and at the same time contains exactly 4 $X$, $Y$ operators, so it is a starter. \\
     \item	$\text{XYYYIZZZII}$$-$starter
     
     This Pauli string conserves the symmetries of the Hartree-Fock state and at the same time contains exactly four $X$, $Y$ operators, so it is a starter.\\
     \item	$\text{YYIZZZIZXY}$$-$starter, \\
     
          This Pauli string conserves the symmetries of the Hartree-Fock state and at the same time contains exactly four $X$, $Y$ operators, so it is a starter.\\
	 \item	$\text{XXZXZIIIYI}$$-$starter, \\
	 
	      This Pauli string conserves the symmetries of the Hartree-Fock state and at the same time contains exactly four $X$, $Y$ operators, so it is a starter.\\
	 \item	$\text{XYZYIZZIYI}$$-$starter, \\
	 
	      This Pauli string conserves the symmetries of the Hartree-Fock state and at the same time contains exactly four $X$, $Y$ operators, so it is a starter.\\
	 \item	$\text{XXXZIIZZZY}$$-$starter, \\
	 
	      This Pauli string conserves the symmetries of the Hartree-Fock state and at the same time contains exactly four $X$, $Y$ operators, so it is a starter.\\
	 \item	$\text{XXIIYXZZII}$$-$starter, \\
	 
	      This Pauli will create two electrons in the $C$ orbital when it acts on the Hartree-Fock state. The resulting state will still transform as the $A_1$ irreducible representation. That means this Pauli string conserves the symmetries of the Hartree-Fock state and at the same time contains exactly four $X$, $Y$ operators, so it is a starter.\\
	 \item	$\text{XYXZXXXYZY}$$-$not a starter, \\
	 
	 This Pauli string contains more than 8 $X$, $Y$ operators, so it cannot be a starter. But it obeys rules 1-2.\\
	 \item	$\text{XXIYIIXYZY}$$-$not a starter, \\
	 	
 	 This Pauli string contains more than 6 $X$, $Y$ operators, so it cannot be a starter. But it obeys rules 1-2.\\
	 \item	$\text{IIZIZZYYXY}$$-$not a starter, \\
	 
  	 This Pauli does not preserve the number of particles in the Hartree-Fock state, so it cannot be a starter. But it obeys rules 1-2.\\
	 \item	$\text{ZZXZXXIIZY}$$-$not a starter, \\

  	 This Pauli does not preserve the number of particles in the Hartree-Fock state, so it cannot be a starter. But it obeys rules 1-2.\\	 
	 \item	$\text{YZZZXYZZZY}$$-$not a starter, \\

  	 This Pauli does not preserve the number of particles in the Hartree-Fock state, so it cannot be a starter. But it obeys rules 1-2.\\	 
	 \item	$\text{YXZZIZYYII}$$-$starter, \\

      This Pauli string conserves the symmetries of the Hartree-Fock state and at the same time contains exactly four $X$, $Y$ operators, so it is a starter.\\	 
	 \item	$\text{IXIZXXZZYI}$$-$not a starter. \\
	 
  	 This Pauli does not preserve the number of particles in the Hartree-Fock state, so it cannot be a starter. But it obeys rules 1-2.\\
	 
\end{itemize}

\subsection{Pool for BeH$_2$ 12-qubit problem}\label{app:BeH2}
To simulate the BeH$_2$ molecule we used the following pool (see \cite{source_code_BeH_2} for details on the code used to generate the pool):
\begin{equation}
	\begin{aligned}
		&\text{ZYXIZZZZZYYI}, \text{YXIIZZIIYYII}, \text{ZIXYZZZIYYII}, \\
		&\text{XXIZZZYXIIII},\text{XYZIZIYYZIII}, \text{IIYXYYZZZZII},\\
		&\text{ZZYXIZYYIIII}, \text{YZIXZZZIIYYI},\text{IXXZIIIZZXYI}, \\
		&\text{XIIIZZXYYIXY}, \text{XXXZYXXXYXYI}, \text{ZXIIIZZZZYII},\\
		&\text{XIZZIZXYZXII}, \text{YZXZZIZZYZYI}, \text{YZYXZIZIXZXY},\\
		&\text{ZZZIZIIZXXXY}, \text{IZZZYYYXYXXY}.
	\end{aligned}
	\label{app:Pool:BeH2}
\end{equation}

 In order to describe the pool in terms of symmetry, we must first put symmetry and spin labels on each of the 12 spin-orbitals. We will again label the orbitals in terms of irreducible representations from Table~\ref{tab:BeH2_character_table}:
\begin{equation}
    A_1^\alpha A_1^\beta B^\alpha B^\beta C^\alpha C^\beta D^\alpha D^\beta A_1^\alpha A_1^\beta B^\alpha B^\beta,
    \label{eq:labeling of the orbitals for BeH2}
\end{equation}
while the Hartree-Fock state in the frozen-core approximation looks like
\begin{equation}
    \ket{111100000000}.
    \label{eq:Hartree-Fock_LiH} 
\end{equation}
Let us now analyze each operator from the pool in Eq.~\eqref{app:Pool:BeH2} and see how it obeys the symmetry conditions we identified.
\begin{itemize}[noitemsep]
 	\item $\text{ZYXIZZZZZYYI}$$-$starter, \\
 	
 	The total spin projection and number of particles of the Hartree-Fock state do not change. It also remains in the $A_1$ symmetry subspace. That means this Pauli string conserves the symmetries of the Hartree-Fock state and at the same time contains exactly 4 $X$, $Y$ operators, so it is a starter.\\
    \item $\text{YXIIZZIIYYII}$$-$starter, \\
    
    This Pauli string conserves the symmetries of the Hartree-Fock state and at the same time contains exactly four $X$, $Y$ operators, so it is a starter.\\
 	\item $\text{ZIXYZZZIYYII}$$-$starter, \\
 	
     This Pauli string conserves the symmetries of the Hartree-Fock state and at the same time contains exactly four $X$, $Y$ operators, so it is a starter.\\
	\item $\text{XXIZZZYXIIII}$$-$starter,\\
	
     This Pauli string conserves the symmetries of the Hartree-Fock state and at the same time contains exactly four $X$, $Y$ operators, so it is a starter.\\
	\item $\text{XYZIZIYYZIII}$$-$starter,\\
	
     This Pauli string conserves the symmetries of the Hartree-Fock state and at the same time contains exactly four $X$, $Y$ operators, so it is a starter.\\	
	\item $\text{IIYXYYZZZZII}$$-$starter,\\
	
     This Pauli string conserves the symmetries of the Hartree-Fock state and at the same time contains exactly four $X$, $Y$ operators, so it is a starter.\\	
	\item $\text{ZZYXIZYYIIII}$$-$starter,\\
	
     This Pauli string conserves the symmetries of the Hartree-Fock state and at the same time contains exactly four $X$, $Y$ operators, so it is a starter.\\	
	\item $\text{YZIXZZZIIYYI}$$-$starter,\\
	
     This Pauli string conserves the symmetries of the Hartree-Fock state and at the same time contains exactly four $X$, $Y$ operators, so it is a starter.\\	
	\item $\text{IXXZIIIZZXYI}$$-$starter,\\
	
     This Pauli string conserves the symmetries of the Hartree-Fock state and at the same time contains exactly four $X$, $Y$ operators, so it is a starter.\\	
	\item $\text{XIIIZZXYYIXY}$$-$not a starter,\\
	
  	 This Pauli does not preserve the number of particles in the Hartree-Fock state, so it cannot be a starter. But it obeys rules 1-2.\\
	\item $\text{XXXZYXXXYXYI}$$-$not a starter,\\
	
  	 This Pauli does not preserve the number of particles in the Hartree-Fock state, so it cannot be a starter. But it obeys rules 1-2.\\
	\item $\text{ZXIIIZZZZYII}$$-$not a starter,\\
	
  	 This Pauli is a single-particle excitation, so it cannot be a starter. But it obeys rules 1-2.\\
	\item $\text{XIZZIZXYZXII}$$-$not a starter,\\
	
  	 This Pauli does not preserve the number of particles in the Hartree-Fock state, so it cannot be a starter. But it obeys rules 1-2.\\
	\item $\text{YZXZZIZZYZYI}$$-$starter,\\
	
     This Pauli string conserves the symmetries of the Hartree-Fock state and at the same time contains exactly four $X$, $Y$ operators, so it is a starter.\\
	\item $\text{YZYXZIZIXZXY}$$-$not a starter,\\
	
	This Pauli contains more than four $X$, $Y$ operators, so it is not a starter. But it obeys rules 1-2.\\
	\item $\text{ZZZIZIIZXXXY}$$-$not a starter,\\
	
   This Pauli does not preserve the number of particles in the Hartree-Fock state, so it cannot be a starter. But it obeys rules 1-2.\\
	\item $\text{IZZZYYYXYXXY}$$-$not a starter,\\
	
   This Pauli does not preserve the number of particles in the Hartree-Fock state, so it cannot be a starter. But it obeys the rules 1-2.\\
\end{itemize}

\bibliographystyle{quantum}
\bibliography{references}
\end{document}